\pdfoutput=1

\documentclass[fleqn]{sigplanconf}

\usepackage{amsmath}
\usepackage{amssymb}
\usepackage{amsthm}
\usepackage{url}
\usepackage[T1]{fontenc} 
\usepackage[utf8]{inputenc}
\usepackage{fixltx2e} 
\usepackage{ifthen} 
\usepackage{listings}
\usepackage{microtype}
\usepackage[usenames,dvipsnames,table]{xcolor}
\usepackage{multicol} 
\usepackage{stmaryrd}
\usepackage{mathpartir}
\usepackage{verbatim} 
\usepackage[para]{footmisc} 
\usepackage[caption=false]{subfig} 
\usepackage{pdfpages} 
\usepackage{pifont} 

\usepackage[inline]{enumitem}


\setcitestyle{notesep={~}}
\let\oldcite\cite
\renewcommand\cite[2][]{%
\ifx\\#1\\%
  \oldcite{#2}%
\else%
  \oldcite[(#1)]{#2}%
\fi}

\makeatletter
\newcommand{\laterdef}[2]{%
  \protected@write\@auxout{}{\gdef\string#1{#2}}}
\makeatother

\begingroup
\lccode`\~=`\ %
\lowercase{%
  }%
\endgroup
\newcounter{word}

\newcommand\reportwordcount[2]{%
  \footnotesize\hfill [\arabic{word} words; FK reading ease #2]%
  \laterdef{#1}{\theword}
}

\newcommand\ColdWORDS{276}
\newcommand\CnewWORDS{80}
\newcommand\OoldWORDS{391}
\newcommand\OnewWORDS{89}

\usepackage{listings}
\lstset{
  basicstyle=\small\ttfamily,
  basewidth=0.5em, 
  columns=fixed,
  mathescape=true,
  numberstyle=\scriptsize,
  escapechar=",
  xleftmargin=5mm,
  numbersep=3mm,
}

\usepackage{tikz}
\usetikzlibrary{matrix,fit,calc,positioning,backgrounds}
\tikzset{event/.style={draw=none, inner sep=0.4mm}}
\tikzset{thd/.style={draw,dotted, rounded corners=1mm, inner sep=1mm}}
\tikzset{wg/.style={draw,dotted, rounded corners=1mm, inner sep=0.7mm}}
\tikzset{dv/.style={draw,dotted, rounded corners=1mm, inner sep=1.3mm}}

\pgfdeclarelayer{thdlayer}
\pgfdeclarelayer{wglayer}
\pgfdeclarelayer{dvlayer}
\pgfsetlayers{dvlayer,wglayer,thdlayer,main}

\usepackage{pgfplots}
\usepackage{pgfplotstable}

\newcommand{\calcrowmean}{
    \def\rowmean{0}
    \pgfmathparse{\pgfkeysvalueof{/pgfplots/table/summary statistics/end index}-\pgfkeysvalueof{/pgfplots/table/summary statistics/start index}+1}
    \edef\numberofcols{\pgfmathresult}
    \pgfplotsforeachungrouped \col in {\pgfkeysvalueof{/pgfplots/table/summary statistics/start index},...,\pgfkeysvalueof{/pgfplots/table/summary statistics/end index}}{
        \pgfmathparse{\rowmean+\thisrowno{\col}/\numberofcols}
        \edef\rowmean{\pgfmathresult}
    }
}
\newcommand{\calcstddev}{
    \def\rowstddev{0}
    \calcrowmean
    \pgfplotsforeachungrouped \col in {\pgfkeysvalueof{/pgfplots/table/summary statistics/start index},...,\pgfkeysvalueof{/pgfplots/table/summary statistics/end index}}{
        \pgfmathparse{\rowstddev+(\thisrowno{\col}-\rowmean)^2/(\numberofcols-1)}
        \edef\rowstddev{\pgfmathresult}
    }
    \pgfmathparse{sqrt(\rowstddev)}
}
\newcommand{\calcstderror}{
    \calcrowmean
    \calcstddev
    \pgfmathparse{sqrt(\rowstddev)/sqrt(\numberofcols)}
}

\pgfplotstableset{
    summary statistics/start index/.initial=1,
    summary statistics/end index/.initial=10,
    create col/mean/.style={
        /pgfplots/table/create col/assign/.code={
            \calcrowmean
            \pgfkeyslet{/pgfplots/table/create col/next content}\rowmean
        }
    },
    create col/standard deviation/.style={
        /pgfplots/table/create col/assign/.code={
            \calcstddev
            \pgfkeyslet{/pgfplots/table/create col/next content}\pgfmathresult
        }
    },
    create col/standard error/.style={
        create col/assign/.code={
            \calcstderror
            \pgfkeyslet{/pgfplots/table/create col/next content}\pgfmathresult
        }
    }
}


\newif\iftodos
\todostrue 


\newif\ifbarriers
\definecolor{hlcolor}{HTML}{FFFF99}
\newcommand\mhl[1]{\ifbarriers
#1\else\fboxsep=0pt\colorbox{hlcolor}{\strut $#1$}\fi}
\newcommand\hl[1]{\fboxsep=0pt\colorbox{hlcolor}{\strut #1}}


\let\oldparagraph=\paragraph
\renewcommand\paragraph[1]{\oldparagraph{#1.}}

\makeatletter
\setvspace{\@paragraphaboveskip}{4pt plus 2pt minus 2pt}
\usepackage{suffix}
\WithSuffix\newcommand\paragraph*[1]{\@startsection{paragraph}{4}{0pt}{\@paragraphaboveskip}{0pt}{\normalsize
\bfseries \if \@times \itshape \fi}{#1\,} \hspace*{0pt}}
\makeatother

\newcommand\tstack[2][l]{\begin{tabular}[t]{@{}#1@{}}#2\end{tabular}}

\newcommand\stack[2][l]{{\renewcommand\arraystretch{1}\begin{array}[t]{@{}#1@{}}#2\end{array}}}

\newcommand\var[1]{\mathit{#1}}

\newcommand\evtlbl[1]{\mbox{#1:~}}
\newcommand\irreflexive{\mathrm{irr}}
\newcommand\acyclic{\mathrm{acy}}
\newcommand\isempty{\mathrm{empty}}
\newcommand\consistent{\mathrm{consistent}}
\newcommand\faulty{\mathrm{faulty}}
\newcommand\allowed{\mathrm{allowed}}

\newcommand\memord{\mathit{memory\mhyphen{}order}}
\newcommand\goodsc{\mathit{good\mhyphen{}sc}}

\newcommand\sccondI{sc\mhyphen{}all}
\newcommand\sccondII{sc\mhyphen{}dv}

\newcommand\IF{\mathbf{if}}
\newcommand\THEN{\mathbf{then}}
\newcommand\ELSE{\mathbf{else}}

\newcommand\INSld{\texttt{LD}}
\newcommand\INSst{\texttt{ST}}
\newcommand\INSincl[1]{\texttt{INC$_{\texttt{L#1}}$}}
\newcommand\INSflushl[1]{\texttt{FLU$_{\texttt{L#1}}$}}
\newcommand\INSinvall[1]{\texttt{INV$_{\texttt{L#1}}$}}

\newcommand\citeline[5]{%
  \ifthenelse{\equal{#2}{#4}}{%
    \ifthenelse{\equal{#3}{#5}}{%
      \cite[#2/#3]{#1}%
    }{%
      \cite[#2/#3--#5]{#1}%
    }%
  }{%
    \cite[#2/#3--#4/#5]{#1}%
  }%
}
\newcommand\citecl{\citeline{opencl21}}

\newcommand\citec[3][]{%
  \ifthenelse{\equal{#3}{}}{%
    \cite[\S#2]{c11}%
  }{%
    \cite[\S#2:#3]{c11}%
  }%
}

\definecolor{colormo}{HTML}{3366CC}
\definecolor{colorrf}{HTML}{FF0000}
\definecolor{colorrb}{HTML}{006600}
\definecolor{colorhb}{HTML}{660000}
\definecolor{colorS}{HTML}{990099}

\tikzset{
  edgemo/.style={colormo,-latex},
  edgerf/.style={colorrf,-latex},
  edgerb/.style={colorrb,dashed,-latex},
  edgeS/.style={colorS, -latex},
  edgesb/.style={black, -latex},
  edgethd/.style={black, latex-latex}
}


\newcommand\axiom[1]{\mbox{\upshape\sffamily #1}}
\newcommand\oaxiom[1]{\mbox{\upshape\sffamily O-#1}}


\newcommand\pgap{\hspace{-0.1em}}
\renewcommand\parallel{\mathbin{{\mid}\pgap{\mid}}}
\renewcommand\bigparallel{\mathop{{\mid}\pgap{\mid}}}
\newcommand\tripleparallel{\mathbin{{\mid}\pgap{\mid}\pgap{\mid}}}
\newcommand\bigtripleparallel{\mathop{{\mid}\pgap{\mid}\pgap{\mid}}}
\newcommand\quadparallel{%
  \mathbin{{\mid}\pgap{\mid}\pgap{\mid}\pgap{\mid}}}
\newcommand\bigquadparallel{%
  \mathop{{\mid}\pgap{\mid}\pgap{\mid}\pgap{\mid}}}

\newcommand\herd{{\scshape Herd}}
\newcommand\Herd{{\scshape Herd}}
\newcommand\Cppmem{Cppmem}
\newcommand\CDSChecker{CDSChecker}
\newcommand\cat{{\ttfamily .cat}}

\newcommand\axSpartial{\axiom{S\textsubscript{partial}}}
\newcommand\axSsimp{\axiom{S\textsubscript{simp}}}
\newcommand\axOSsimp{\oaxiom{S\textsubscript{simp}}}
\newcommand\axOSscoped{\oaxiom{S\textsubscript{scoped}}}

\newcommand\morlx{{\tt RLX}}
\newcommand\moacq{{\tt ACQ}}

\newcommand\morel{{\tt REL}}
\newcommand\moar{{\tt AR}}
\newcommand\mosc{{\tt SC}}


\newcommand\swi{{\tt WI}}
\newcommand\swg{{\tt WG}}
\newcommand\sdv{{\tt DV}}
\newcommand\sall{{\tt ALL}}

\newcommand\atomic{{\rm atomic}}
\newcommand\nonatomic{{\rm non\mhyphen{}atomic}}
\newcommand\Label{\mathcal L}
\newcommand\evW{{\rm W}}
\newcommand\evWna{{\rm W}_{\rm na}}
\newcommand\evR{{\rm R}}
\newcommand\evRna{{\rm R}_{\rm na}}
\newcommand\evRMW{{\rm RMW}}
\newcommand\evF{{\rm F}}
\newcommand\evFL{{\rm F}_{\rm L}}
\newcommand\evFG{{\rm F}_{\rm G}}
\newcommand\evFGL{{\rm F}_{\rm GL}}

\everymath{\it}\everydisplay{\it}
\def\mathcodes#1{\mathcode`#1=\numexpr\mathcode`#1-"7000\relax 
   \ifx#10\else\expandafter\mathcodes\fi}
\mathcodes1234567890

\mathcode`\;="8000 
{\catcode`\;=\active \gdef;{\mathbin{\semicolon}}}
\mathchardef\semicolon="603B

\mathchardef\mhyphen="2D

\catcode`"=\active
\def"#1"{\mbox{\tt #1}}

\renewcommand\jot{0.5pt}

\newcommand\eqdef{\overset{{\scriptstyle\rm def}}{=}}

\title{Overhauling SC Atomics in C11 and OpenCL}

\authorinfo{Mark Batty}{University of Kent, UK}{m.j.batty@kent.ac.uk}
\authorinfo{Alastair F. Donaldson}{Imperial College London, UK}{alastair.donaldson@imperial.ac.uk}
\authorinfo{John Wickerson}{Imperial College London,
UK}{j.wickerson@imperial.ac.uk\vspace*{10mm}}

\usepackage[firstpage]{draftwatermark}
\SetWatermarkText{\hspace*{8in}\raisebox{3.5in}{\includegraphics[scale=0.1]{}}}
\SetWatermarkAngle{0}

\allowdisplaybreaks[1]

\usepackage{flushend}

\usepackage{framed}
\definecolor{shadecolor}{HTML}{EEEEEE}

\newtheorem{theorem}{Theorem}
\newtheorem{lemma}{Lemma}
\theoremstyle{definition}
\newtheorem{definition}[lemma]{Definition}
\newtheorem{remark}[lemma]{Remark}
\newtheorem{example}{Example}

\newenvironment{commentary}{\begin{proof}[Commentary]\phantom\qedhere}{\end{proof}}
\newenvironment{Example}{\begin{snugshade}\begin{example}}{\end{example}\end{snugshade}}
\setlength\FrameSep{1.5mm}


\begin{document}

\toappear{}

\maketitle

\begin{abstract} 
Despite the conceptual simplicity of sequential consistency (SC), the
semantics of SC atomic operations and fences in the C11 and OpenCL
memory models is subtle, leading to convoluted prose descriptions that
translate to complex axiomatic formalisations. We conduct an overhaul
of SC atomics in C11, reducing the associated axioms in both number
and complexity. A consequence of our simplification is that the SC
operations in an execution no longer need to be totally ordered. This
relaxation enables, for the first time, efficient and exhaustive
simulation of litmus tests that use SC atomics. We extend our improved
C11 model to obtain the first rigorous memory model formalisation for
OpenCL (which extends C11 with support for heterogeneous many-core
programming). In the OpenCL setting, we refine the SC axioms still
further to give a sensible semantics to SC operations that employ a
`memory scope' to restrict their visibility to specific threads. Our
overhaul requires slight strengthenings of both the C11 and the OpenCL
memory models, causing some behaviours to become disallowed. We argue
that these strengthenings are natural, and that all of the formalised
C11 and OpenCL compilation schemes of which we are aware (Power and
x86 CPUs for C11, AMD GPUs for OpenCL) remain valid in our revised
models. Using the \herd{} memory model simulator, we show that our
overhaul leads to an exponential improvement in simulation time for
C11 litmus tests compared with the original model, making
\emph{exhaustive} simulation competitive, time-wise, with the
\emph{non-exhaustive} \CDSChecker{} tool.
\end{abstract}

\category 
{D.3.1}{Programming Languages}{Formal Definitions and Theory}
\category 
{D.3.3}{Programming Languages}{Language Constructs and Features}
\category
{F.3.2}{Logics and Meanings of Programs}{Semantics of Programming Languages}

\keywords
Formal methods, 
graphics processing unit (GPU),
heterogeneous programming,
HOL theorem prover,
language design,
program simulation,
weak memory models

%

\section{Introduction}

\paragraph{Atomics and memory models} C11 and OpenCL both define a
collection of \emph{atomic operations}, or `atomics', which can be used
by experts to program high-performance, lock-free algorithms in a
portable manner. Atomics accept a \emph{memory order} parameter, which
controls the exposure of certain \emph{relaxed memory} behaviours that
modern CPUs and GPUs natively exhibit.

The C11 and OpenCL specifications~\cite{c11, opencl21} define the
semantics of atomics via axiomatic \emph{memory models}; that is, sets
of rules that govern the reading and writing of shared memory
locations. These memory models are complex, stretching to about 19~and
30~pages, respectively, of convoluted prose. This complexity makes it
extremely challenging to reason about the correctness of programs that
are written in, and compilers that implement, these languages. 

Correctness in any relaxed memory setting is notoriously evasive;
indeed, the subtleties of relaxed memory have previously led to
confirmed bugs in language specifications~\cite{batty+11, batty+15},
deployed processors~\cite{alglave+10}, compilers~\cite{sevcik+08,
morriset+13} and vendor-endorsed programming guides~\cite{alglave+15}.
The importance of correctness in the context of C11 is
well-known. Correctness is just as crucial in OpenCL, which is an open
standard for \emph{heterogeneous programming} that is developed and
supported by major hardware vendors such as Altera, AMD, ARM, Intel,
Nvidia, Qualcomm and Xilinx. OpenCL is a key player in the recent
drive to exploit GPUs and FPGAs in general-purpose computing,
including in safety-critical domains such as medical
imaging~\cite{steuwer+13} and autonomous
navigation~\cite{khronos-group-news-archives14}.

We seek in our work to tame the complexity of these memory models
through \emph{formalisation}.

\paragraph*{The C11 memory model} has been formalised by several
researchers, in varying degrees of completeness, and with varying
degrees of fidelity to the standard~\cite{batty+11, vafeiadis+13,
alglave+14}. These formalisation efforts have proved fruitful; they
have, for instance, enabled the construction of simulators that
automatically explore the allowed behaviours of small C11 programs
(called \emph{litmus tests})~\cite{batty+11, blanchette+11, norris+13,
alglave+14}, underpinned the design of program logics for specifying
and verifying C11 programs~\cite{vafeiadis+13, turon+14}, and they provide
a firm foundation for ongoing debate about the design of the C11
memory model itself~\cite{vafeiadis+15, batty+15}.

\paragraph*{The OpenCL memory model} (introduced in version 2.0 of the
standard) has received comparatively little academic attention, with
the notable exception of the work of Gaster et al.~\cite{gaster+15},
which we discuss further in \S\ref{sec:related}. OpenCL provides a
framework for CPU programs to delegate the execution of
massively-parallel \emph{kernel} functions, written in a variant of C,
to one or more \emph{accelerator devices}, such as GPUs or FPGAs. Threads that
execute these kernels are organised into a hierarchy: threads\footnote{Threads in OpenCL are also called \emph{work-items}.} are
grouped into \emph{work-groups}, and work-groups are grouped by
device. The OpenCL memory model is broadly similar to that of C11, but
is extended with features such as \emph{memory regions} (which contain
locations that are accessible only to a certain subtree of the thread
hierarchy), and \emph{memory scopes} (which, when applied to an atomic
operation, confine its visibility to a certain subtree of threads).

\paragraph{SC atomics}
Our work is distinguished by its focus on the \emph{sequentially
consistent} (SC) fragment of these memory models; that is, the
semantics of atomics whose memory order is "memory\_order\_seq\_cst".
The chief guarantee provided by this memory order is that all SC
atomics in a given execution will execute in some order (say, $S$) on
which all threads mutually agree. Note that these memory models do not
\emph{construct} $S$; they merely postulate the existence of a suitable
$S$.

Sequential consistency is known for its simplicity~\cite{lamport79},
and indeed, any C11 or OpenCL program using \emph{exclusively} SC
atomics would enjoy a simple interleaving semantics. However, when
combined with the more relaxed memory orders that C11 and OpenCL also
provide, the semantics of SC atomics becomes highly complex, and it is
this complexity that we tackle in this paper.

SC atomics are in widespread use, partly because the SC memory order
is used when no other is specified, and partly because programmers are
routinely advised to use SC atomics prior to optimising their code
with the more relaxed memory orders~\cite[p.~221]{williams12}.
Algorithms that make use of SC atomics include Dekker's mutual
exclusion algorithm~\cite{dijkstra02}, and more generally,
multiple-producer-multiple-consumer algorithms that require every
consumer to observe the actions of every producer in the same
order.\footnote{\url{http://en.cppreference.com/w/cpp/atomic/memory_order}} As such, it is important that the semantics of SC
atomics is clear to programmers, to allow smooth transitioning
between the exclusive use of SC (for ease of reasoning) to a mixture
of SC and weaker-than-SC atomics (for performance optimisation).

In theory, SC atomics can be avoided by replacing them
with mutex-protected non-atomic operations (and simple spinlock
mutexes can be implemented using just release and acquire
atomics~\cite[p.~111]{williams12}).  In practice, support for SC atomics is
non-negotiable if software is to make use of concurrency
libraries. This is because the aforementioned replacement of SC
atomics must be performed throughout the entire program -- in both
library code and client code alike -- and with the same mutex variable
for every operation. Moreover, programs that make extensive use of
spinlocks could prove less efficient than those that rely on native SC
atomics, and accidental misuse of locks may lead to deadlock.

\subsection{Main Contributions}

Our work aims to provide clearer, simpler foundations for reasoning
about C11, enabling a clean extension to OpenCL for heterogeneous
programming, and facilitating efficient simulation.

\paragraph{1. Overhauling SC atomics in C11 (\S\ref{sec:sc})}
The C11 specification devotes around \ColdWORDS{} words to explaining
the semantics of SC atomics. In our work, we have translated these
words into mathematical axioms, carefully strengthened these axioms
(without imposing unreasonable demands on the compiler), and then
refactored them so that they are expressed as simply as possible. Our
revised text
\begin{itemize}[label=\checkmark]

\item is shorter (requiring just \CnewWORDS{} words in the same prose style),

\item is simpler (because it reduces seven axioms to just one), and

\item is amenable to more efficient simulation (see below).

\end{itemize}

Supporting the revised text is a provably-equivalent model that avoids
the need to postulate the total order $S$. Instead, the model
constructs a partial order on SC operations, preserving only the edges
of $S$ that can affect program behaviours. The enumeration of all
candidate $S$ relations is one of the most expensive tasks for memory
model simulators like \herd{}; by reducing $S$ to a partial order, we
can dramatically improve simulation performance.

\paragraph{2. Overhauling SC atomics in OpenCL (\S\ref{sec:opencl_sc})}
Our simplifications to the rules governing SC atomics in C11 can be
carried over directly to OpenCL, where the same three benefits listed
above can be reaped. In the OpenCL setting, however, there is an
additional complexity in the semantics of SC atomics. Specifically,
the total order $S$ in which all of a program's SC atomics execute is
only guaranteed to exist when one of two conditions holds: either all
SC atomics in the program's execution use the widest-possible memory
scope and only access memory shared between devices, or all SC atomics
have their memory scope limited to the current device and never access
memory shared between devices. We find that this semantics is
unhelpful to programmers, because if \emph{any} SC atomic violates
these conditions, then \emph{no} SC atomic is guaranteed to have
semantics stronger than acquire/release; this may lead to additional
behaviours not anticipated by the programmer.
The semantics is
simultaneously unhelpful to compiler-writers: a loop-hole that we
discovered in the second condition above means that even device-scoped
SC atomics must be implemented using expensive inter-device
synchronisation.

We have amended the rules that govern SC atomics in OpenCL, so that
the SC guarantees do not vanish immediately in the presence of a
differently-scoped SC atomic somewhere in the program, but instead
degrade gracefully. Our revised rules
\begin{itemize}[label=\checkmark]

\item are shorter and simpler (we can replace \OoldWORDS{}
words in the specification with \OnewWORDS{} words in the same prose style),

\item enable new programming patterns in OpenCL (such as programs that
use SC atomics in a natural manner, yet a manner that violates the
overly restrictive conditions above), 

\item let device-scoped SC atomics be efficiently implemented, and

\item improve the compositionality of OpenCL semantics, and hence the
ability to write concurrency libraries (because the
behaviour of SC atomics no longer depends on unstable, global conditions).

\end{itemize}

\paragraph{3. Proving the implementability of our revised models
(\S\ref{sec:c11_soundness}, \S\ref{sec:opencl_implementability})} 
Our improvements to the SC axioms in the C11 and OpenCL memory models
hinge on slight strengthenings of the models; that is, tweaking some
of the axioms so that fewer executions are allowed. This increases the
demands on compilers that implement these memory models, so it is
important to check that our changes do not invalidate existing
compilation schemes. To this end, we prove that all of the formalised
C11 compilation schemes of which we are aware (namely, those for
Power~\cite{batty+12} and x86~\cite{batty+11} machines) remain sound
after our changes, and we argue informally that our OpenCL changes
preserve the soundness of the only formalised OpenCL compilation
scheme (namely, that for AMD GPUs~\cite{wickerson+15a}).

\subsection{Supporting Contributions}

In order to justify the claims we make in our main contributions, we
have established several supporting artefacts, which we believe are
also valuable in their own right.

\paragraph{4. Formalising the OpenCL memory model
(\S\ref{sec:opencl_sc})} The OpenCL specification contains numerous
ambiguities, omissions and inconsistencies, which makes it a shaky
structure upon which to build an argument about the correctness of an
OpenCL program or compiler. The lack of clarity may lead programmers
and compiler-writers to cautiously opt for low-efficiency
implementations that are easier to guarantee correct. Moreover, there
are instances where the OpenCL specification authors have made
unnecessarily conservative, programmer-unfriendly decisions in the
design of rules for memory consistency. We provide the first
mechanised formalisation of the OpenCL memory model. Our formalisation
serves to clarify the specification, and can henceforth be used to
underpin future program logics for verifying OpenCL kernels, and to
inform further refinements to the memory model.\footnote{Indeed, we
have already built upon our formalisation in another piece of work
that investigates a proposed extension to the OpenCL memory
model~\cite{wickerson+15a}.} In particular, we use our rigorous memory
model to show that the design decisions of the specification can be
made less conservative, offering programmers more flexibility, without
placing any additional burden on efficient implementation of the
language.

\paragraph{5. Formalising the memory models in \cat{}
(\S\ref{sec:c11.cat}, \S\ref{sec:opencl.cat})} 

We have encoded the C11 and OpenCL memory models in the \cat{}
framework~\cite{alglave+14}. Previous formalisations of the C11 memory
model exist, in Isabelle~\cite{batty+11}, Lem~\cite{batty+12} and
Coq~\cite{vafeiadis+13}; here we contribute the first version in
\cat{}. 
We conduct our development work in the \cat{} language because it is
the native input format to the \herd{} memory model simulator, which
has a proven record of efficiently simulating a range of CPU
machine-level memory models~\cite[\S8.3]{alglave+14}.

\paragraph{6. Extending the \herd{} memory model simulator (\S\ref{sec:herd})}
During memory modelling work, tool support for simulating alternative
memory models against litmus tests is invaluable. \Herd{} is able to
simulate any memory model expressed as a \cat{} file, but in its
original incarnation, it supported only machine-level models of
CPUs~\cite{alglave+14} and GPUs~\cite{alglave+15}. To explore our
proposed changes to the C11 and OpenCL memory models, we extended
\herd{} with a module for generating executions of C11 and OpenCL
programs, and support for language-level memory models that
incorporate `undefined behaviour' (a notion that is absent from
machine-level models). This involved adding around 8000 lines to the
original \herd{} codebase.\footnote{As estimated by \texttt{git log}.}
All of the examples in this paper have been automatically checked with
\Herd{}. Using \herd{}, we have evaluated the impact of our changes to
the SC axioms, and found an exponential improvement in simulation
performance.

\paragraph{Online material} Our companion webpage
provides instructions for downloading \herd{} and our \cat{} formalisations~\cite{overhauling-companion}.

\section{The C11 Memory Model in \cat{}}
\label{sec:c11.cat}

This section describes formally the current C11 memory model.

The semantics of multi-threaded C11 programs is formalised in two
stages; the first concerning the thread-local semantics, and the
second capturing the memory model. Roughly speaking, the first stage
takes as input a C11 program and calculates its set of
\emph{executions} (\S\ref{sec:c11_programs},
\S\ref{sec:c11_executions}); the second stage then compares each
execution to the memory model to determine which executions are
actually allowed (\S\ref{sec:c11_axioms}).

There exist several prior formalisations of the C11 memory
model~\cite{batty+11, vafeiadis+13, alglave+14}. The novelty of this
section is the first comprehensive formalisation of the model in the
\cat{} framework~\cite{alglave+14}, which enables the use of the
efficient \herd{} simulator~\cite{alglave+14}. For reasons of space,
and because they are orthogonal to the thrust of our contributions, we
omit our treatment of the `consume' memory order, unsequenced races
and C11 locks from the paper. However, our \cat{}-based formalisation
fully accounts for these features, and is provided on our companion webpage~\cite{overhauling-companion}.

\subsection{C11 Programs}
\label{sec:c11_programs}

A C11 program manipulates a set of shared memory locations.

\begin{definition}[Memory locations]
\label{def:c11_locs}
Each memory location is declared with either a \emph{non-atomic} or an
\emph{atomic} type. That is, $type(l) \in \{\atomic,\nonatomic\}$
for every memory location $l$. 
%
\end{definition}

\begin{definition}[Structure of C11 programs]
\label{def:c11_syntax}
We consider C11 programs of the form $P = \bigparallel_{t\in T} p_t$,
where $T$ is a set of thread identifiers, $p_t$ is a piece of
sequential code, and $\parallel$ is parallel composition. (This static
form of parallelism is a simplification of the dynamic thread creation
that C11 actually provides.)
\end{definition}
Atomic locations can be accessed via \emph{atomic operations}; these
include reads, writes, and read-modify-writes (RMWs). C11 also defines
\emph{fence} operations. Atomic operations and fences expose the
programmer to relaxed memory behaviours; which behaviours are exposed
is controlled by the operation's \emph{memory order} parameter.

\begin{definition}[Memory orders] 
The available memory orders in C11 are:
\[
\begin{array}{r@{~}ll}
o ::= & \morlx & \text{(relaxed)} \\
 \mid & \moacq & \text{(acquire, only for reads/RMWs)} \\
 \mid & \morel & \text{(release, only for writes/RMWs)} \\
 \mid & \moar  & \text{(acquire+release, only for RMWs)} \\ 
 \mid & \mosc  & \text{(sequentially consistent, the default).} 
\end{array}
\] 
\end{definition}

\begin{Example}[A C11 program]
\label{ex:c11_prog}
We give below a contrived C11 program that operates on two atomic
locations, "x" and "y", using atomic "store" and "load" operations
with a variety of memory orders.
\begin{center}
\scriptsize
\begin{tabular}{@{}l@{~}||@{~}l@{~}||@{~}l@{~}||@{~}l@{}}
\multicolumn{4}{c}{\texttt{atomic\_int *x; atomic\_int *y;}} \\
\texttt{store(x,1,$\morlx$);} &
\texttt{r1=load(x,$\morlx$);} & 
\texttt{store(x,2,$\mosc$);} &
\texttt{store(y,1,$\mosc$);} 
\\
& 
\texttt{r2=load(x,$\morlx$);} & 
\texttt{r3=load(y,$\mosc$);} & 
\texttt{r4=load(x,$\mosc$);}
\end{tabular}
\end{center}
\end{Example}


\subsection{C11 Executions}
\label{sec:c11_executions}

The C11 memory model is defined in terms of program \emph{executions}.
An execution $X$ takes the form of a mathematical graph, where each
node $e\in E$ is labelled with a run-time memory event (see
Def.~\ref{def:c11_event_labels}), and the edges connect events
performed by the same thread in program order. In other words, an
execution is a partial order over a set $E$ of events, and can be
thought of as a `concurrent trace'.

\begin{definition}[Event labels]
\label{def:c11_event_labels}
Each event's label characterises the kind of instruction that
gave rise to the event, and incorporates up to four attributes, as
listed in the first five columns of the following table:
\[
\begin{array}{l@{~}l@{}cccc@{}l|cccc}
kind & & loc & rval & wval & ord & & R & W & F & A \\ \hline
\evWna & (&l,&   & v,& &)& & \checkmark & & \\
\evW   & (&l,&   & v,& o&)& & \checkmark & & \checkmark \\
\evRna & (&l,& v,&   & &)& \checkmark & & & \\ 
\evR   & (&l,& v,&   & o&)& \checkmark & & & \checkmark\\ 
\evRMW & (&l,& v,& v',&o&)& \checkmark & \checkmark & & \checkmark\\ 
\evF   & (&  &   &   & o&)& & & \checkmark & \checkmark
\end{array}
\]
The labels represent (reading down): non-atomic writes, atomic writes,
non-atomic reads, atomic reads, RMWs (which are always atomic), and
memory fences. Where relevant, labels contain (reading across): the
location being accessed, the value being read, the value being
written, and the memory order specified by the programmer. A
$\checkmark$-mark on the right-hand side of the table indicates that
an event with this label belongs to the set $R$ (resp.~$W$, $F$, $A$)
of events that read (resp.~write, are a fence, are atomic). Let
$\Label$ denote the set of labels.
\end{definition}
\begin{definition}[Executions] 
\label{def:c11_executions}
An execution is a tuple
$X = (E, I, lbl,$ $thd,$ $sb)$ with the following components. 
\begin{itemize}
\item $E$ is a set of event identifiers.

\item $lbl \in E \rightarrow \Label$ associates each event with a
label. For each event $e$, $loc(e)$ projects the $loc$ attribute
of $lbl(e)$ (if applicable); $rval(e)$, $wval(e)$ and $ord(e)$
provide similar projections.

\item $I\subseteq E$ is a set of \emph{initial} events. Every initial
event $e \in I$ is a non-atomic write of zero; that is,
$kind(e) = \evWna$ and $wval(e) = 0$. Moreover,
there is exactly one initial event per location
.

\item $thd \subseteq (E\setminus I)^2$ is an equivalence relation on
non-initial events that relates events from the same thread.

\item $sb\subseteq thd$ is the \emph{sequenced-before} relation: a
strict partial order (i.e., irreflexive and transitive) between events from
the same thread, that captures the program order.
\end{itemize}
\end{definition}

Let $\mathbb X$ be the set of all executions. Next, we define a number
of derived sets and relations over the events of an execution that
will prove useful in describing the memory model.

\begin{definition}[Derived sets and relations]
In the context of an execution $(E,I,lbl,thd,sb)$, we define the
relation $=_{loc}$ as
$\{(e,e') \in (E \setminus F)^2 \mid loc(e) = loc(e')\}$; it holds
between non-fence events that access the same location. The relation
$=_{val}$, defined as
$\{(e,e') \in W \times R \mid wval(e) = rval(e')\}$, holds when the
first event writes the value that the second reads. For each memory
order $o\in\{\morlx,\moacq,\morel,\moar,\mosc\}$, we abbreviate the
set $\{e\in A\mid ord(e) = o\}$ as just $o$. We also define the set
$nal = \{e\in E\setminus F \mid type(loc(e)) = \nonatomic\}$ of events
that access a non-atomic location.
\end{definition}

\begin{Example}[A C11 execution] 
\label{ex:c11_exec}
The diagram below depicts one execution of the program given in
Example~\ref{ex:c11_prog}. The initial events, $a$ and $b$, are placed
above the events of the four parallel threads. Reflexive and
transitive edges are elided, and derived relations are not shown.  
%
\begin{center}
\begin{tikzpicture}[inner sep=1pt]
\node[event, anchor=west](h) at (1.5,1.7)
{\evtlbl{$a$}$\evWna({\tt x},0)$};

\node[event, anchor=west](i) at (4.5,1.7) 
{\evtlbl{$b$}$\evWna({\tt y},0)$};

\node[event, anchor=west](a) at (-0.3,1) 
{\evtlbl{$c$}$\evW({\tt x},1,\morlx)$};

\node[event, anchor=west](b) at (2,1) 
{\evtlbl{$d$}$\evR({\tt x},1,\morlx)$};

\node[event, anchor=west](c) at (2,0.2) 
{\evtlbl{$e$}$\evR({\tt x},2,\morlx)$};

\node[event, anchor=west](d) at (4.1,1) 
{\evtlbl{$f$}$\evW({\tt x},2,\mosc)$};

\node[event, anchor=west](e) at (4.1,0.2) 
{\evtlbl{$g$}$\evR({\tt y},0,\mosc)$};

\node[event, anchor=west](f) at (6.2,1) 
{\evtlbl{$h$}$\evW({\tt y},1,\mosc)$};

\node[event, anchor=west](g) at (6.2,0.2) 
{\evtlbl{$i$}$\evR({\tt x},1,\mosc)$};

\foreach \i/\j in {d/e, f/g, b/c}
\draw[edgesb] ([xshift=1mm]\i.south) to[auto,pos=0.4]
node{$sb$} ([xshift=1mm]\j.north -| \i.south);

\foreach \i/\j in {d/e, f/g, b/c}
\draw[edgethd] ([xshift=-1mm]\i.south) to[auto,swap,pos=0.4]
node{$thd$} ([xshift=-1mm]\j.north -| \i.south);
\end{tikzpicture}
\end{center}
\end{Example}

\paragraph{Basic executions} The first stage of the C11 semantics
translates a program into a set of
executions called its \emph{basic} set.\footnote{This set is sometimes called the
`pre-executions'~\cite{batty+11} or the `opsems'~\cite{vafeiadis+15}.}
Each execution in this set is compatible with the instructions of the
individual threads, but the set is constructed without considering the
behaviour of shared memory, so it provides an over-approximation of
the executions that will ultimately be allowed to happen once the
whole program and the memory model are taken into account. For
instance, the execution in Example~\ref{ex:c11_exec} is a basic
execution of the program in Example~\ref{ex:c11_prog}: the values of
the write events correspond to the program text, but the values of the
read events are arbitrary and the basic set of all executions ranges
over all choices. We do not define formally how the basic executions
are constructed, and simply assume their existence for any program we
wish to consider. Practical tools such as \herd{} and
\Cppmem{}~\cite{batty+11} implement this construction as part of
litmus test simulation; the construction is investigated formally in
ongoing work by Memarian et al.

\paragraph{Candidate executions} The second stage of the C11
semantics, which is the focus of this paper, takes as input a
program's basic execution set and returns the set of \emph{allowed
executions}. In order to build the allowed executions, we employ an
intermediate structure called a \emph{candidate execution}, which
extends an execution with a \emph{witness} that comprises three
additional relations, called $rf$ (reads-from), $mo$ (modification
order) and $S$ (sequential consistency order).

\begin{definition}[Candidate executions] 
\label{def:candidate_exec}
A candidate execution is a pair $(X,w)$ where $X=(E,I,lbl,thd,sb)$ is
an execution, and $w=(rf,mo,S)$ is a witness comprising three
relations $rf,mo,S\subseteq E^2$. A candidate execution is
well-formed, written $wf(X,w)$, if: 
\begin{itemize}

\item the reads-from relation links write events to read events,
such that every read observes exactly one write, and the
locations and values match; that is,
\begin{gather}
\tag{\axiom{WfRf}}
\left.\begin{array}{@{}p{5.7cm}@{}}$\forall e\in R \ldotp \exists! e'\in W\ldotp (e',e) \in rf$ \\[1mm]
and~~~ $rf \subseteq ({=_{loc}} \cap {=_{val}})$\end{array}\right\}
\end{gather}
where $\exists!$ means `exists unique';

\item the modification order
relates, in a strict total order, all and only those events that write
to the same atomic location; that is,
\begin{gather}
\tag{\axiom{WfMo}}
\left.\begin{array}{@{}p{5.7cm}@{}} $(mo \cup mo^{-1}) = ({=_{loc}}
\cap W^2 \setminus nal^2 \setminus id)$ \\[1mm] and~~~ $\acyclic(mo)$\end{array}\right\}
\end{gather}
where $\acyclic(r)$ means that $r$ is acyclic; and

\item the $S$ relation relates, in a
strict total order, all and only the SC events in an execution; that is,
\begin{gather}
\tag{\axiom{WfS}}
\acyclic(S)\text{~~~and~~~}(S \cup S^{-1}) = (\mosc^2 \setminus id)
\end{gather}
\end{itemize}
\end{definition}

\begin{Example}[A C11 candidate execution] 
\label{ex:c11_candidate_exec}
The diagram below extends the execution in Example~\ref{ex:c11_exec}
with a witness. We elide the $thd$ edges (each column corresponds to
one thread). 
The candidate execution is well-formed, and consistent with the axioms
of the memory model (presented next).
\begin{center}
\begin{tikzpicture}[inner sep=1pt]
\node[event, anchor=west](h) at (0.7,1.8)
{\evtlbl{$a$}$\evWna({\tt x},0)$};

\node[event, anchor=west](i) at (3.4,1.8) 
{\evtlbl{$b$}$\evWna({\tt y},0)$};

\node[event, anchor=west](a) at (-0.3,1) 
{\evtlbl{$c$}$\evW({\tt x},1,\morlx)$};

\node[event, anchor=west](b) at (2,1) 
{\evtlbl{$d$}$\evR({\tt x},1,\morlx)$};

\node[event, anchor=west](c) at (2,0) 
{\evtlbl{$e$}$\evR({\tt x},2,\morlx)$};

\node[event, anchor=west](d) at (4.1,1) 
{\evtlbl{$f$}$\evW({\tt x},2,\mosc)$};

\node[event, anchor=west](e) at (4.1,0) 
{\evtlbl{$g$}$\evR({\tt y},0,\mosc)$};

\node[event, anchor=west](f) at (6.2,1) 
{\evtlbl{$h$}$\evW({\tt y},1,\mosc)$};

\node[event, anchor=west](g) at (6.2,0) 
{\evtlbl{$i$}$\evR({\tt x},1,\mosc)$};

\foreach \i/\j in {d/e, f/g, b/c}
\draw[edgesb] ([xshift=1mm]\i.south) to[auto,pos=0.4]
node{$sb$} ([xshift=1mm]\j.north -| \i.south);

\foreach \i/\j in {d/e, f/g}
\draw[edgeS] (\i.south) to[auto,swap,pos=0.4]
node{$S$} (\j.north -| \i.south);
\draw[edgeS, inner sep=0] (e) to[auto, swap] node{$S$} (f);

\draw[edgemo] (a) to[auto, bend left=15] node{$mo$} (d);

\draw[edgemo] (h) to[auto, swap, bend right=15] node{$mo$} (a);

\draw[edgemo] (i) to[auto, swap, bend left=15] node{$mo$} (f);

\draw[edgerf] (i) to[auto, bend right=12, pos=0.25] node{$rf$} (e);

\draw[edgerf] (a) to[auto, swap, bend right=20] node{$rf$} (b);

\draw[edgerf] (d) to[auto] node{$rf$} (c);


\draw[edgerf] (a) to[auto, swap, out=300, in=188, pos=0.15] node{$rf$}
(g);
\end{tikzpicture}
\par\vspace*{-4mm}
\end{center}
\end{Example}

\subsection{C11 Axioms}
\label{sec:c11_axioms}

A candidate execution is deemed \emph{consistent} with the memory
model if it satisfies the 12 consistency axioms of
Def.~\ref{def:consistency_axioms}, which we shall build towards in
this subsection. We express the axioms using the \cat{}
language~\cite{alglave+14}, a concise language based on the
propositional fragment of Tarski's relation calculus~\cite{tarski41}.

\begin{definition}[The \cat{} language] The cat language supports the
construction of relations via: union, intersection, difference,
complement ($\neg r$), inverse ($r^{-1}$), reflexive closure ($r^?$),
transitive closure ($r^+$), and relational composition ($r_1;r_2$),
which is defined such that $(x,z)\in r_1; r_2$ if $(x,y)\in r_1$ and
$(y,z)\in r_2$ for some $y$. It also provides the syntax
$[s] = \{(e,e)\mid e \in s\}$ for the identity relation ($id$)
restricted to the set $s$. (These operators can be neatly combined to
describe paths through graphs; for instance,
$[s_1]; r_1; [s_2] ; r_2 ; [s_3]$ relates $s_1$-events to those
$s_3$-events that are reachable by following an $r_1$-edge to an
$s_2$-event and then an $r_2$-edge.) Each axiom of the memory model
must be expressed in the form of an acyclicity ($\acyclic\,r$),
irreflexivity ($\irreflexive\,r$), or emptiness ($\isempty\,r$)
constraint on some relation $r$ constructed using these operators.
\end{definition}

In order to define these axioms, we first
need to introduce several derived relations.

\begin{remark}
In the following, we justify our formal definitions by referring to
the C11 standard~\cite{c11}, using the notation \S$N$:$n$ for section
$N$, paragraph $n$. We refer to the C++11 standard~\cite{c++11},
whenever a clause was erroneously omitted from C11. (C11 inherits its
memory model from C++11). Similarly, we refer to the C++14
standard~\cite{c++14} in the case of an erroneous omission from C++11.
We include these omitted parts because doing so leads to a cleaner
model that we believe to be closer to the designers' intent.
\end{remark}

\begin{definition}[Further derived sets and relations]
\label{def:c11_further_derived}
In the context of a candidate execution $(E,I,lbl,thd,sb,rf,mo,S)$, we
define the following subsets of $E$ and relations over $E$:
\begin{eqnarray*}
acq &\eqdef& \moacq \cup \moar \cup (\mosc \cap (R\cup F)) \\
rel &\eqdef& \morel \cup \moar \cup (\mosc \cap (W\cup F)) \\
fr &\eqdef& rf^{-1} ; mo \\
Fsb &\eqdef& [F] ; sb \\
sbF &\eqdef& sb ; [F] \\
rs' &\eqdef& thd \cup (E^2 ; [R \cap W]) \\
rs &\eqdef& mo \cap rs' \setminus ((mo \setminus rs') ; mo) \\
sw &\eqdef& \stack{([rel] ; Fsb^? ; [A \cap W] ; rs^? ; rf ; {}\\{} [R\cap A] ; sbF^?; [acq]) \setminus thd}
\\
hb &\eqdef& (sb \cup (I \times \neg I) \cup
sw)^+ \\
hbl &\eqdef& hb \cap {=_{loc}} \\
vis &\eqdef& (W \times R) \cap hbl \setminus(hbl ; [W] ; hb) \\
cnf &\eqdef& ((W \times W) \cup (W \times R) \cup (R \times
W)) \cap {=_{loc}} \\
dr &\eqdef& cnf \setminus hb \setminus hb^{-1} 
\setminus A^2 \setminus thd
\end{eqnarray*}
\end{definition}

\begin{commentary} 
The set $acq$ (resp.~$rel$) contains all events that behave as an
acquire (resp.~a release).\footnote{\citec{7.17.3}{3--4},
\citec{7.17.4.1}{2}} The \emph{from-read} relation ($fr$) links each
read to all those writes that are $mo$-after the write the read
observed~\cite{alglave+14}.

The relation $rs$ captures the \emph{release sequence}, using $rs'$ as
a helper. The release sequence of $e$ comprises those events that form
a maximal $mo$-chain, starting from $e$, of events that either are in
$e$'s thread or are RMWs.\footnote{\citec{5.1.2.4}{10}}

Release/acquire synchronisation is captured by the $sw$ relation. This
relates an atomic write-release event to an atomic read-acquire event
in a different thread if the read obtains its value from the write or
its release sequence.\footnote{\citec{5.1.2.4}{11}} If the acquire
(resp. release) is a fence, the synchronisation happens via an atomic
read (resp. write) sequenced before (resp. after) the
fence.\footnote{\citec{7.17.4}{2--4}}

Happens-before ($hb$) is a transitive relation that includes
sequenced-before and synchronisation edges, and puts initial events
before all other events.\footnote{\citec{5.1.2.4}{18}, simplified in
the absence of "memory\_order\_consume"} We use $hbl$ to abbreviate
happens-before to events on the same location. A write is
\emph{visible} ($vis$) to a read if it is the most recent write to
that location in
happens-before.\footnote{\label{fn:narf}\citec{5.1.2.4}{19}}

Two events are in \emph{conflict} ($cnf$) if they access the same
location and at least one is a write;\footnote{\citec{5.1.2.4}{4}}
these events go on to form a \emph{data race} ($dr$) if they are
unrelated by happens-before, they are not both atomic, and they are in
different threads.\footnote{\citec{5.1.2.4}{25}} 
\end{commentary}

We now use the derived relations of Def.~\ref{def:c11_further_derived} to formalise what it means for an execution to be consistent.

\begin{definition}[Consistency] 
\label{def:consistency_axioms}
A candidate execution $(X,w) = (E,I,lbl,thd,sb,rf,mo,S)$ is
\emph{consistent}, written $\consistent(X,{}$ $w)$, if it is well-formed and
it satisfies all of the following axioms:
{
\renewcommand\jot{1.5mm}
\newcommand\wheregap{34mm}
\begin{gather}
\irreflexive(hb)
\tag{\axiom{Hb}}
\\
\irreflexive((rf^{-1})^? ; mo ; rf^? ; hb) 
\tag{\axiom{Coh}}
\\
\irreflexive(rf ; hb)
\tag{\axiom{Rf}}
\\
\isempty ((rf ; [nal]) \setminus vis) 
\tag{\axiom{NaRf}}
\\
\irreflexive (rf \cup (mo ; mo ; rf^{-1}) \cup (mo ; rf))
\tag{\axiom{Rmw}}
\\
\rlap{$\irreflexive(S ; r_1)$} 
\hspace{\wheregap} \text{where $r_1 = \var{hb}$} 
\tag{\axiom{S1}}
\\
\rlap{$\irreflexive(S ; r_2)$}
\hspace{\wheregap} \text{where $r_2 = \var{Fsb}^? ; \var{mo} ; \var{sbF}^?$} 
\tag{\axiom{S2}}
\\
\rlap{$\irreflexive(S ; r_3)$} 
\hspace{\wheregap} \text{where $r_3 = \var{rf}^{-1} ; [\mosc] ; \var{mo}$} 
\tag{\axiom{S3}}
\\
\rlap{$\irreflexive((S\setminus(mo;S)) ; r_4)$}
\hspace{\wheregap} \text{where $r_4 = \var{rf}^{-1}; \var{hbl} ; [W]$}
\tag{\axiom{S4}}
\\
\rlap{$\irreflexive(S ; r_5)$}
\hspace{\wheregap} \text{where $r_5 = \var{Fsb} ; \var{fr}$} 
\tag{\axiom{S5}}
\\
\rlap{$\irreflexive(S ; r_6)$} 
\hspace{\wheregap} \text{where $r_6 = \var{fr} ; \var{sbF}$} 
\tag{\axiom{S6}}
\\
\rlap{$\irreflexive(S ; r_7)$}
\hspace{\wheregap} \text{where $r_7 = \var{Fsb} ; \var{fr} ; \var{sbF}$} 
\tag{\axiom{S7}}
\end{gather}
}
\end{definition}

\begin{commentary}
These axioms are equivalent to those in Batty et al.'s Lem
formalisation~\cite{batty+12}, the fidelity of which has been endorsed
by the C11 standards committee, but because they are expressed in the
\cat{} language, they are markedly more concise. We have established
this equivalence using the HOL theorem prover, with the help of a tool
we wrote for exporting \cat{} files to Lem, and our proof script is
available online~\cite{overhauling-companion}. We now explain each
axiom in turn.

Happens-before must contain no
cycles.\footnote{\cite[\S1.10:12]{c++11}} Requiring irreflexivity here
is sufficient~(\axiom{Hb}), since $hb$ is transitive.
Coherence~(\axiom{Coh}) governs the relationship between $hb$ and
$mo$: if the write $e_1$ is $mo$-before the write $e_2$, then $e_2$
(and any events that read from $e_2$) must not happen before $e_1$
(nor before any events that read from
$e_1$).\footnote{\citec{5.1.2.4}{7}, \citec{5.1.2.4}{22},
\cite[\S1.10:17--18]{c++11}} A read must not observe a write that
happens after it~(\axiom{Rf}),\footnote{The specification uses the
`visible sequence of side effects' to phrase this
clause~\citec{5.1.2.4}{22}, but Batty~\cite[\S5.3]{batty14} has proved
that `happens after' suffices.} and a read of a non-atomic location
must observe a visible write (\axiom{NaRf}).\footref{fn:narf} An RMW
must observe the immediately-preceding write in
$mo$~(\axiom{Rmw});\footnote{\citec{7.17.3}{12}} that is, not itself
(first disjunct), nor a too-early write (second disjunct), nor a
too-late write (third disjunct).

This leaves the SC axioms, which we present using \emph{where}-clauses
for ease of reference later. Axiom \axiom{S1} states that $S$ must be
consistent with happens-before.\footnote{\label{fn:SCreads}\citec{7.17.3}{6}} Axiom
\axiom{S2} governs the relationship between $S$ and $mo$: if the write
$e_1$ is $mo$-before the write $e_2$, then $e_2$ (and any fences
sequenced after $e_2$) must not come before $e_1$ (nor before
any fences sequenced before $e_1$) in $S$.\footnote{\citec{7.17.3}{6},
\cite[\S29.3:7]{c++11}, \cite[\S29.3:7]{c++14}} 

Axioms \axiom{S3} and \axiom{S4} constrain the values that an SC read
$e_1$ of a location $l$ may observe. If there are any SC writes to $l$
preceding $e_1$ in $S$, then $e_1$ must read either from the most
recent of these in $S$ -- call this $e_2$ -- or from a non-SC write
that does not happen before $e_2$.\footref{fn:SCreads} We
encode this requirement as two irreflexivity constraints. First, we
wish to rule out reading from an SC write that is not the most recent
in $S$; that is, we wish to forbid cycles of the shape depicted below
left, where $S_{\rm loc} \eqdef S\cap {=_{loc}}$. Axiom \axiom{S3} does this, using the simplified form shown
below right.
\begin{center}
\begin{tikzpicture}[inner sep=1pt, baseline=2mm]
\node (a) at (0,0) {$R$};
\node (b) at (1.2,0) {$W$};
\node (c) at (1.2,0.8) {$W$};
\draw[edgeS] (c) to [auto, pos=0.4] node {$S$} (b);
\draw[edgeS] (b) to [auto, swap, pos=0.4] node {$S_{\rm loc}$} (a);
\draw[edgerf] (c) to [auto, bend right, swap] node {$rf$} (a);
\end{tikzpicture}
\begin{tabular}{c}simplifies \\ to\end{tabular}
\begin{tikzpicture}[inner sep=1pt, baseline=2mm]
\node (a) at (0,0) {};
\node (b) at (1,0) {};
\node (c) at (1,0.8) {$\mosc$};
\draw[edgemo] (c) to [auto] node {$mo$} (b);
\draw[edgeS] (b) to [auto, swap] node {$S$} (a);
\draw[edgerf] (c) to [auto, swap, bend right] node {$rf$} (a);
\end{tikzpicture}
\end{center}
Second, we require $e_1$ \emph{not} to read from a write that happens
before $e_2$; that is, we wish to forbid cycles of the shape depicted
below left. Axiom \axiom{S4} does this, using the simplified form
shown below right.
\begin{center}
\begin{tikzpicture}[inner sep=1pt, baseline=2mm]
\node (a) at (0,0) {$R$};
\node (b) at (3.5,0) {$W$};
\node (c) at (3.5,0.8) {$W$};
\draw[-latex] (c) to [auto] node {$hb$} (b);
\draw[edgeS] (b) to [auto, swap, pos=0.47] node {$S_{\rm loc} \setminus (S\mathbin{\semicolon} [W]
\mathbin{\semicolon} S_{\rm loc})$} (a);
\draw[edgerf] (c) to [auto, out=180, in=43, pos=0.8, swap] node {$rf$} (a);
\end{tikzpicture}
\,\,\,
\begin{tabular}{@{}c@{}}simpli- \\ fies to\end{tabular}
\,\,\,
\begin{tikzpicture}[inner sep=1pt, baseline=2mm]
\node (a) at (0,0) {};
\node (b) at (2.2,0) {$W$};
\node (c) at (2.2,0.8) {};
\draw[-latex] (c) to [auto] node {$hbl$} (b);
\draw[edgeS] (b) to [auto, swap,pos=0.45] node {$S \setminus (mo\mathbin{\semicolon}S)$} (a);
\draw[edgerf] (c) to [auto, swap, out=180, in=60, pos=0.8] node {$rf$} (a);
\end{tikzpicture}
\end{center}

Axioms \axiom{S5}, \axiom{S6} and \axiom{S7} govern SC fences. If a
read $e_1$ of a location $l$ is sequenced after an SC fence, then
$e_1$ must not read from a write to $l$ that is $mo$-earlier than the
last write to $l$ that precedes the fence in
$S$.\footnote{\citec{7.17.3}{9}} In fact, `the last write' here can be
safely generalised to `some write', because being $mo$-earlier than
\emph{some} write to $l$ that precedes the fence in $S$ implies being
$mo$-earlier than the last write, since $mo$ is total~(\axiom{S5}). If
a write $e_2$ to location $l$ is sequenced before an SC fence, then
any SC read of $l$ that follows the fence in $S$ must not read from a
write to $l$ that is $mo$-earlier than
$e_2$~(\axiom{S6}).\footnote{\citec{7.17.3}{10}} Finally, if a read
$e_1$ of location $l$ is sequenced after an SC fence, and a write
$e_2$ to $l$ is sequenced before another SC fence that precedes the
first fence in $S$, then $e_1$ must not read from a write $mo$-earlier
than $e_2$~(\axiom{S7}).\footnote{\citec{7.17.3}{11}} \end{commentary}

A final axiom formalises what it means for an execution to exhibit a
fault.
\begin{definition}[Faultiness] A candidate execution $(X,w)$ is
\emph{faulty}, written $\faulty(X,w)$, if it is consistent and
does \emph{not} satisfy the following axiom:
\begin{gather}
\tag{\axiom{Dr}}
\isempty(dr).
\end{gather}
\end{definition}
If any basic execution can be extended to a faulty candidate
execution, then the entire program's behaviour is `undefined' and any
execution is allowed. Otherwise, the allowed executions are those
basic executions that can be extended to a consistent candidate
execution.

\begin{definition}[Allowed executions] Given a set $Xs$ of a program's
basic executions, we obtain the program's \emph{allowed}
executions as:
\begin{center}
$\allowed(Xs) ~~\eqdef~~ \stack{\IF~\exists X \in Xs\ldotp \exists w \ldotp
\faulty(X,w)~\THEN~\mathbb X \\ \ELSE~\{X \in Xs
\mid \exists w\ldotp \consistent(X,w)\}}$
\end{center}
\end{definition}

\section{Overhauling the SC Axioms in C11}
\label{sec:sc}

The rules for SC axioms in C11, as demonstrated in the previous
section, are highly convoluted. In this section, we describe how these
rules can be improved in two fairly orthogonal ways. In
\S\ref{sec:sc-partial}, we describe how the total order over SC
operations can be replaced with a partial order; this simplification
will be demonstrated in \S\ref{sec:experiments} to dramatically improve the
efficiency with which the model can be simulated. In
\S\ref{sec:sc-simp}, we describe a slight strengthening of the model
that enables significant simplifications to be made. These
simplifications lead to a model that is easier to understand, and
should prove easier to work with in a formal setting.

\subsection{Reducing $S$ from a Total to a Partial Order}
\label{sec:sc-partial}

We observe that all but one of the seven SC axioms
(Def.~\ref{def:consistency_axioms}) can be written in the form
$\irreflexive(S ; r)$ for some relational expression $r$. These $r$'s
can be seen as the constraints on the total order $S$. Axiom
\axiom{S4} is not quite of this form. However, replacing its
`$S\setminus(mo;S)$' with just `$S$', to obtain the axiom \axiom{S4a}
given below, happens to coincide exactly with an amendment to the
model already proposed by Vafeiadis et al.\ to lend the model more
desirable mathematical properties~\cite[\S4.2]{vafeiadis+15}.
\begin{gather}
\irreflexive(S ; r_4) \tag{\axiom{S4a}}
\end{gather}
Where axiom \axiom{S4} forbids an SC read to observe any write that
happens before the \emph{most recent} SC write in $S$, axiom
\axiom{S4a} forbids it to observe any write that happens before
\emph{any} SC write in $S$. Let us assume here that the
uncontroversial amendment of Vafeiadis et al. will be accommodated by the C standards
committee.
\begin{lemma}[SC order extension principle]
\label{lem:oep}
For any relation $r$,
there exists a strict total order $S$ over all SC events that is
compatible with $r$, if and only if $r$, when restricted unequal SC events, is acyclic. That is:
\[
(\exists S\ldotp \axiom{WfS} \wedge \irreflexive(S;r)) =
\acyclic(\mosc^2 \setminus id \cap r).
\]
\end{lemma}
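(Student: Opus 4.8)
The plan is to prove the biconditional by establishing each direction separately, treating it essentially as an instance of the classical order-extension (Szpilrajn) theorem applied with care to the restriction to SC events.

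\medskip

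\noindent\textbf{The forward direction ($\Rightarrow$).} Suppose $S$ is a strict total order over all SC events with $\irreflexive(S;r)$. I want to show $\acyclic(\mosc^2 \setminus id \cap r)$. The key observation is that $\irreflexive(S;r)$ says precisely that $r$ contains no edge $(x,y)$ with $(y,x)\in S$; equivalently, every edge of $r$ that lies between two distinct SC events must agree with $S$, i.e.\ $(\mosc^2 \setminus id) \cap r \subseteq S$. Since $S$ is a strict total order it is in particular acyclic, and any subrelation of an acyclic relation is acyclic; hence $(\mosc^2\setminus id)\cap r$ is acyclic. This direction is routine.

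\medskip

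\noindent\textbf{The reverse direction ($\Leftarrow$).} Suppose $r' \eqdef (\mosc^2 \setminus id) \cap r$ is acyclic. I want to construct a strict total order $S$ over all SC events satisfying \axiom{WfS} and $\irreflexive(S;r)$. First take the transitive closure $(r')^+$, which is still irreflexive (acyclicity of $r'$ is exactly irreflexivity of $(r')^+$) and is a strict partial order contained in $\mosc^2\setminus id$. By the Szpilrajn order-extension theorem, extend $(r')^+$ to a strict total order $S$ on the set $\mosc$ of all SC events; this $S$ satisfies \axiom{WfS} by construction (it is acyclic, being a strict total order, and relates exactly the distinct pairs of SC events). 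It remains to check $\irreflexive(S;r)$, i.e.\ that there is no pair $(x,y)$ with $(x,y)\in S$ and $(y,x)\in r$. If such a pair existed, then since $x\neq y$ (as $S$ is irreflexive) and both are SC events, we would have $(y,x)\in (\mosc^2\setminus id)\cap r = r'$, hence $(y,x)\in S$; but then $(x,y)\in S$ and $(y,x)\in S$ contradicts $S$ being a strict (hence antisymmetric, irreflexive) order. So no such pair exists and $\irreflexive(S;r)$ holds.

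\medskip

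\noindent\textbf{Anticipated obstacle.} The mathematical content is light; the only genuine subtlety is the bookkeeping around the ``$\setminus id$'' and the restriction to SC events. One must be careful that $r$ itself need not be irreflexive or confined to SC events, so the acyclicity hypothesis is only about the \emph{restricted} relation $r'$; the construction must therefore extend $(r')^+$ rather than $r$, and the verification of $\irreflexive(S;r)$ must re-derive membership in $r'$ from membership in $r$ together with the facts that the endpoints are distinct SC events. The other point to state explicitly is the appeal to Szpilrajn's theorem (the order-extension principle for strict partial orders), since the name ``SC order extension principle'' in the lemma is precisely signalling that this is where the work is delegated. Provided these points are handled, both directions are short.
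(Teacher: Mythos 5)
Your proof is correct and follows the same route as the paper, which simply cites the order extension (Szpilrajn) principle; you have just made explicit the routine bookkeeping (restricting $r$ to $\mosc^2\setminus id$, taking the transitive closure, and checking \axiom{WfS} and $\irreflexive(S;r)$) that the paper leaves implicit.
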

\begin{proof}
This follows from the well-known order extension principle: that any
(strict) partial order can be extended to a (strict) total order.
\end{proof}
We are now in a position to replace the seven irreflexivity
axioms with a single acyclicity axiom.
\begin{theorem}
There exists a strict total order on SC events that satisfies axioms
\axiom{S1}, \axiom{S2}, \axiom{S3}, \axiom{S4a}, \axiom{S5},
\axiom{S6}, and \axiom{S7}, if and only if the following \axSpartial{}
axiom (which states that the union of all the constraints on
$S$, when restricted to unequal SC events, is acyclic) holds:
{
\mathindent=2mm
\begin{gather}
\acyclic(\mosc^2 \setminus id \cap (r_1 \cup r_2 \cup r_3 \cup r_4
\cup r_5 \cup r_6 \cup r_7))
\tag{\axSpartial}
\end{gather}
}
That is:
\[
(\exists S\ldotp
\axiom{WfS} \wedge \axiom{S1} \wedge \axiom{S2} \wedge \axiom{S3}
\wedge \axiom{S4a} \wedge \axiom{S5} \wedge \axiom{S6} \wedge
\axiom{S7}) = \axSpartial.
\] 
\end{theorem}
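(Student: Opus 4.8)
The plan is to apply Lemma~\ref{lem:oep} with a single, carefully chosen relation $r$, and then observe that the seven separate irreflexivity axioms collapse, essentially by propositional reasoning, into one irreflexivity statement about $S$ composed with that combined $r$. Concretely, set $r = r_1 \cup r_2 \cup \dots \cup r_7$. The core observation is that for any fixed $S$,
\[
\axiom{S1} \wedge \dots \wedge \axiom{S7a} \quad\Longleftrightarrow\quad \irreflexive(S;(r_1\cup\cdots\cup r_7)),
\]
where $\axiom{S7a}$ abbreviates the list with $\axiom{S4a}$ in place of $\axiom{S4}$. This holds because $S;(r\cup r') = (S;r)\cup(S;r')$ by distributivity of relational composition over union, and $\irreflexive(a\cup b) \Leftrightarrow \irreflexive(a)\wedge\irreflexive(b)$. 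Hence each $\irreflexive(S;r_i)$ is one conjunct, and their conjunction is exactly $\irreflexive(S;r)$ with $r$ the big union. (Note $\axiom{S4a}$, unlike $\axiom{S4}$, has precisely the shape $\irreflexive(S;r_4)$, which is why the amendment of Vafeiadis et al.\ is invoked first — without it this step would not go through uniformly.)

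Next I would chain this equivalence with Lemma~\ref{lem:oep}. The lemma states
\[
(\exists S\ldotp \axiom{WfS}\wedge\irreflexive(S;r)) = \acyclic(\mosc^2\setminus id \cap r).
\]
Instantiating $r := r_1\cup\cdots\cup r_7$ and substituting the propositional equivalence above inside the existential quantifier yields
\[
(\exists S\ldotp \axiom{WfS}\wedge\axiom{S1}\wedge\dots\wedge\axiom{S7a}) = \acyclic\bigl(\mosc^2\setminus id \cap (r_1\cup\cdots\cup r_7)\bigr),
\]
and the right-hand side is by definition \axSpartial{}. That completes the argument. The only subtlety to spell out is that moving the propositional rewriting under the $\exists S$ is legitimate: for each fixed $S$ the two propositions are equal, so the existentially quantified statements are equal as well.

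I do not expect a genuine obstacle here — the theorem is essentially a bookkeeping corollary of Lemma~\ref{lem:oep} together with the trivial distribution laws. The one place requiring a little care is making sure the relations $r_1,\dots,r_7$ are exactly those appearing in the \emph{where}-clauses of Def.~\ref{def:consistency_axioms}, with $\axiom{S4}$ replaced by $\axiom{S4a}$ so that every axiom literally reads $\irreflexive(S;r_i)$; once that normalisation is in place, distributivity of $;$ over $\cup$ and the fact that irreflexivity of a union is the conjunction of irreflexivities finish the proof in two lines. I would also remark, as the paper's commentary suggests, that the restriction to $\mosc^2\setminus id$ inside \axSpartial{} is harmless because composition with $S$ already confines attention to ordered pairs of distinct SC events — indeed that is precisely what Lemma~\ref{lem:oep} encodes.
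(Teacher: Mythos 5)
Your proposal is correct and matches the paper's own proof: the paper likewise collapses the seven constraints into $\irreflexive(S;(r_1\cup\cdots\cup r_7))$ by ``basic properties of relations'' (exactly your distributivity and union-irreflexivity argument) and then applies Lemma~\ref{lem:oep} with $r$ instantiated to $r_1\cup\cdots\cup r_7$. No further comment needed.
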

\begin{proof}
\begin{gather*}
\quad \exists S\ldotp
\axiom{WfS} \wedge \axiom{S1} \wedge \axiom{S2} \wedge \axiom{S3}
\wedge \axiom{S4a} \wedge \axiom{S5} \wedge \axiom{S6} \wedge
\axiom{S7}
\\ 
= \text{[basic properties of relations]}
\\
\quad \exists S\ldotp \axiom{WfS} \wedge \irreflexive(S; (r_1 \cup r_2 \cup r_3 \cup
r_4 \cup r_5 \cup r_6 \cup r_7)) 
\\
= \text{[by Lemma~\ref{lem:oep} with $r$ instantiated to $r_1 \cup
\dots \cup r_7$]}
\\
\quad \acyclic(\mosc^2 \setminus id \cap (r_1 \cup r_2 \cup r_3 \cup
r_4 \cup r_5 \cup r_6 \cup r_7)) \hfill\qedhere
\end{gather*}
\end{proof}
Having replaced axioms \axiom{S1}--\axiom{S7} with the new
\axSpartial{} axiom, we no longer require the $S$ relation in
execution witnesses. Memory model simulators, such as \herd{},
typically work by enumerating all executions of a program and then
filtering out the consistent subset. Removing the need to iterate
through all possible total orders of SC events -- a computation that
is exponential in the number of SC events -- allows simulation
performance to be greatly improved, as demonstrated in
\S\ref{sec:experiments}.

\subsection{A Stronger and Simpler SC Axiom}
\label{sec:sc-simp}

We now show that it is possible to strengthen the SC semantics without
requiring changes to the compilation schemes of any of the C11 target
architectures that have an established formal memory model, that is:
x86 and Power. The strengthening we propose simplifies the
\axSpartial{} axiom significantly and provides stronger guarantees to
the programmer.

The proposal for this simplification arises from the observation that
the relations considered in the \axSpartial{} axiom are nearly
symmetric in $hb$, $mo$ and $fr$. In particular, both $hb$ and $mo$
constrain the $S$ order between any combination of SC fences and
atomics. The treatment of $fr$ is different: for $fr$ edges that begin
or end at a fence, the axioms \axiom{S5}, \axiom{S6} and \axiom{S7}
ensure that the SC order is constrained to match. When two SC atomics
are related by an $fr$ edge (\axiom{S3} and \axiom{S4}), ordering is
only provided when the intermediate access that forms the $fr$ is
itself an SC atomic (rule \axiom{S3}), or when the $mo$ edge from the
intermediate access of the $fr$ to its target is also covered by a
$hb$ edge (rule \axiom{S4a}).

Our proposal is to strengthen the \axSpartial{} axiom, to add these
missing constraints so that every $fr$ edge between SC atomics
contributes to the $S$ order. We achieve this in our model by removing
the $[\mosc]$ restriction from \axiom{S3}, which results in the
following axiom:
\begin{align}
& \irreflexive(S ; fr). \tag{\axiom{S3a}} 
\end{align}
This change permits a significant simplification to the SC rules that
we establish in the following theorem.
\begin{theorem}
\label{thm:SC_simp}
If rule \axiom{S3} is replaced by \axiom{S3a} (that is, if $r_3$ is
replaced with $fr$ in the \axSpartial{} axiom) then \axSpartial{}
becomes equivalent to:
\begin{gather}
\tag{\axSsimp}
\acyclic(\mosc^2 \setminus id \cap (Fsb^?; (hb \cup fr \cup mo) ; sbF^?)).
\end{gather}
That is:
\[
\acyclic(\mosc^2 \setminus \mathit{id} \cap (r_1 \cup r_2 \cup \mathit{fr} \cup r_4
\cup r_5 \cup r_6 \cup r_7)) = \axSsimp.
\] 

\end{theorem}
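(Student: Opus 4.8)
The plan is to show the two acyclicity conditions are equivalent by showing the relations inside them have the same transitive closure when restricted to unequal SC events. Write $N = \mosc^2 \setminus id$ for the "distinct SC events" relation, and let $R_{\mathrm{part}} = r_1 \cup r_2 \cup fr \cup r_4 \cup r_5 \cup r_6 \cup r_7$ (the modified \axSpartial{} relation, with $r_3$ replaced by $fr$) and $R_{\mathrm{simp}} = Fsb^?; (hb \cup fr \cup mo); sbF^?$. Since acyclicity of $N \cap r$ depends only on the cycles through distinct SC events, it suffices to prove that a cycle among distinct SC events exists in $R_{\mathrm{part}}$ if and only if one exists in $R_{\mathrm{simp}}$; equivalently, that $(N \cap R_{\mathrm{part}})^+$ and $(N \cap R_{\mathrm{simp}})^+$ have the same reflexive pairs, or more simply that each of $R_{\mathrm{part}}$, $R_{\mathrm{simp}}$ is contained in the transitive closure of the other once we pass through SC events. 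I would carry this out as two inclusions.

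For the direction $R_{\mathrm{part}} \subseteq R_{\mathrm{simp}}$ (up to SC-restricted closure): unfold the definitions $r_1 = hb$, $r_2 = Fsb^?; mo; sbF^?$, $r_4 = rf^{-1}; hbl; [W] = fr; [W] \cup \dots$ — actually $r_4 = rf^{-1}; hbl; [W]$; note $rf^{-1}; hb = fr \cup (rf^{-1}; hb)$ is not immediate, so here I'd use that between two SC events $r_4 \subseteq fr; hbl$, hence $r_4 \subseteq fr; hb \subseteq (hb \cup fr \cup mo)^+$ wrapped by the optional $Fsb^?, sbF^?$. Similarly $r_5 = Fsb; fr$, $r_6 = fr; sbF$, $r_7 = Fsb; fr; sbF$ are each literally of the form $Fsb^?; fr; sbF^?$, hence inside $R_{\mathrm{simp}}$; and $r_1 = hb \subseteq R_{\mathrm{simp}}$ (take both optionals empty, use the $hb$ disjunct), $r_2 \subseteq R_{\mathrm{simp}}$ (the $mo$ disjunct), $fr \subseteq R_{\mathrm{simp}}$ (the $fr$ disjunct). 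For the reverse direction $R_{\mathrm{simp}} \subseteq (N \cap R_{\mathrm{part}})^+$ restricted to SC events: take an edge $Fsb^?; (hb \cup fr \cup mo); sbF^?$ between two SC events. The core $hb$, $mo$ or $fr$ connects two events that may be non-SC fences at the ends; but the surrounding $Fsb^?$/$sbF^?$ reach SC fences. One must check: a composite $Fsb^?; hb; sbF^?$ between SC events lies in $r_1 \cup (\text{SC-closure})$ — using that $Fsb; hb; sbF \subseteq hb$ since $sb \subseteq hb$ and $hb$ is transitive, so it collapses to $r_1$; the $mo$ core gives exactly $r_2$; the $fr$ core with both optionals present is $r_7$, with only $Fsb$ it is $r_5$, with only $sbF$ it is $r_6$, and with neither it is $fr$ itself. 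So every $R_{\mathrm{simp}}$ edge between SC events is a single $R_{\mathrm{part}}$ edge (or an $hb$ edge, which is $r_1$).

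The main obstacle is the asymmetry around $r_4$ and the old $r_3$: I must make precise the claim in the surrounding commentary that replacing the $[\mosc]$-restricted $r_3 = rf^{-1}; [\mosc]; mo$ by the unrestricted $fr = rf^{-1}; mo$ subsumes $r_4 = rf^{-1}; hbl; [W]$ when we are between SC events. The key sub-lemma is: if $e_1$ (an SC read) reads from a write $w$ with $w \xrightarrow{hbl} w'$ and $w'$ is a write, then $e_1 \xrightarrow{fr} w'$ — this is because $w \xrightarrow{hbl} w'$ with both writes on the same location forces $w \xrightarrow{mo} w'$ by coherence (axiom \axiom{Coh}, together with \axiom{WfMo} totality of $mo$ on that location), and then $fr = rf^{-1}; mo$ gives $e_1 \xrightarrow{fr} w'$ directly. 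So $r_4 \subseteq fr$ between SC events, and the $r_4$ disjunct becomes redundant once $fr$ is present. Hence under the replacement $r_3 \rightsquigarrow fr$, the set $r_1 \cup r_2 \cup fr \cup r_4 \cup r_5 \cup r_6 \cup r_7$ has the same SC-restricted transitive closure as $r_1 \cup r_2 \cup fr \cup r_5 \cup r_6 \cup r_7$, and the argument of the previous two paragraphs identifies this with \axSsimp{}. I would present the whole thing as a short chain of set-equalities on $\mosc^2 \setminus id \cap (\cdot)$, citing \axiom{Coh} and \axiom{WfMo} for the $r_4$-absorption step and transitivity of $hb$/$sb \subseteq hb$ for the $Fsb$/$sbF$ collapse, concluding with $\acyclic$ applied to equal relations.
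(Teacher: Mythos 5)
Your proposal is correct and is essentially the paper's own argument: the paper proves the statement by the same short chain of relation identities, folding $fr$, $r_5$, $r_6$, $r_7$ into $Fsb^? ; fr ; sbF^?$, discarding $r_4$ because $r_4 \subseteq fr$, and using $hb = Fsb^? ; hb ; sbF^?$ (as $Fsb, sbF \subseteq sb \subseteq hb$ and $hb$ is transitive) to arrive at $Fsb^? ; (hb \cup fr \cup mo) ; sbF^?$. The only notable difference is the absorption of $r_4$: the paper justifies $r_4 \subseteq fr$ citing \axiom{WfMo} alone, whereas you (more carefully) also invoke coherence (\axiom{Coh}) to orient the $mo$ edge between the two writes; your earlier loose intermediate claim ``$r_4 \subseteq fr ; hbl$'' is unnecessary once that sub-lemma is in place.
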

\begin{proof}
\begin{gather*}
\quad r_1 \cup r_2 \cup fr \cup r_4 \cup r_5 \cup r_6 \cup r_7 
\\ 
= \text{[unfolding definitions and combining $\mathit{fr}$, $r_5$, $r_6$ and $r_7$]}
\\
\quad hb \cup (Fsb^? ; mo ; sbF^?) \cup (Fsb^? ; fr ; sbF^?)
\cup r_4 
\\
= \text{[since $r_4 \subseteq \mathit{fr}$, by \axiom{WfMo}]}
\\
\quad hb \cup (Fsb^? ; mo ; sbF^?) \cup (Fsb^? ; fr ; sbF^?) 
\\
= \text{[since $\mathit{hb} = (\mathit{Fsb}^? ; \mathit{hb} ; \mathit{sbF}^?)$]} 
\\
\quad Fsb^? ; (hb \cup fr \cup mo) ; sbF^?\hfill\qedhere
\end{gather*}
%
\end{proof}
%


\paragraph{Programming impact}
The change presented here does strengthen the memory model; there are
executions that were previously allowed that are now forbidden. The
simplest we found, which is similar to one used by Vafeiadis et
al.~\cite[Fig.~6]{vafeiadis+15}, is presented in
Example~\ref{ex:c11_candidate_exec}. We believe
Example~\ref{ex:c11_candidate_exec} to be a counterintuitive
execution, because the read event $i$ does not observe the most recent
write to "x" in $S$ (namely, $f$), but $c$, which is $mo$-earlier than
$f$. The execution is forbidden by axiom \axiom{S3a} because of its
$f\xrightarrow{S}i\xrightarrow{fr}f$ cycle. Although the current C11
model allows this execution, mapping this example to the formalised
targets of C11 (Power and x86) never yields programs that exhibit it.

\subsection{Soundness of Existing C11 Compilation Schemes}
\label{sec:c11_soundness}

There are two C11 targets with formal architectural memory models: x86
and Power. In this subsection, we establish that for both of these
architectures, the strengthening does not require a stronger
compilation mapping. In both cases, we rely on an existing proof of
soundness from the literature. We need only establish that our
strengthened \axSsimp{} axiom holds.

To establish the soundness of our strengthening for x86, we build on
the soundness proof of Batty et al.~\cite{batty+11}, which uses the
axiomatic model of x86 of Owens et al.~\cite{owens+09}. To obtain
soundness for Power, we build on the soundness proof of Batty et
al.~\cite{batty+12}, which uses the operational Power model of Sarkar
et al.~\cite{sarkar+11}.


\newcommand\soundnesslemma{
Let $P$ be a C11 program that has no faulty executions. 
If we compile $P$ to x86 according to the mapping given by Batty et
al.~\cite{batty+11}, then every valid x86 execution
corresponds to a C11 execution where $\axSsimp{}$
holds. If we compile $P$ to Power according to the mapping given by
Batty et al.~\cite{batty+12}, then every valid Power trace is
observationally equivalent to a C11 execution where $\axSsimp{}$ holds.
%
%
}

\begin{theorem}
\label{lem:soundnesslemma}
\soundnesslemma \hfill [Proof in \S\ref{appx:proofs}]
\end{theorem}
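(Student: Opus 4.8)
Both results we build on already supply, for every valid hardware execution of the compiled program, a \emph{consistent} C11 candidate execution that it corresponds to (x86, where the hardware model is axiomatic) or is observationally equivalent to (Power, where it is operational), together with a witness $(rf,mo,S)$ satisfying all twelve axioms of Def.~\ref{def:consistency_axioms}. So the theorem's ``corresponds''/``observationally equivalent'' clauses are inherited directly from the cited proofs; all that is new is to show that the very same $rf$ and $mo$ also satisfy \axSsimp{}. Now \axSsimp{} is a predicate on $(rf,mo)$ alone, and by the seven-axioms-to-one reduction of \S\ref{sec:sc-partial} (applied with \axiom{S3a} in place of \axiom{S3}) together with Theorem~\ref{thm:SC_simp} it is equivalent to the existence of \emph{some} strict total order $S'$ on the SC events validating \axiom{S1}, \axiom{S2}, \axiom{S3a}, \axiom{S5}, \axiom{S6} and \axiom{S7} -- and \axiom{S4a} comes along for free once \axiom{S3a} holds, since $r_4\subseteq fr$ (as observed in the proof of Theorem~\ref{thm:SC_simp}). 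Hence for each architecture the plan is: (i) read an order $S'$ off the underlying hardware execution; (ii) verify \axiom{S1}, \axiom{S2}, \axiom{S3a}, \axiom{S5}--\axiom{S7} of $S'$; and (iii) conclude \axSsimp{}.

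\textbf{x86.} Under the mapping of Batty et al.~\cite{batty+11}, every SC access is compiled so that it is fenced -- via \texttt{MFENCE} and/or a \texttt{LOCK}-prefixed instruction -- and every SC fence to \texttt{MFENCE}, so that each such instruction is serialised in the x86-TSO memory order of Owens et al.~\cite{owens+09}. We would take $S'$ to be that memory order restricted to the images of C11 SC events (breaking residual ties arbitrarily). The $hb$- and $mo$-cases \axiom{S1} and \axiom{S2}, and the fence-decorated cases \axiom{S5}--\axiom{S7}, follow essentially as in the existing soundness proof: $hb$ and $mo$ between fenced accesses lie in the memory order, and an \texttt{MFENCE} drains the store buffer so that it sits in the memory order between the accesses sequenced before and after it. The genuinely new obligation is \axiom{S3a}: for $r\xrightarrow{fr}w$ with $r,w$ SC atomics and $r$ reading from some $w'$ with $(w',w)\in mo$, we must show $r$ precedes $w$ in the memory order. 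This is exactly the TSO `read-before' property: were $w$ before $r$ there, then $r$ -- issued after its \texttt{MFENCE}, hence not satisfiable early from the store buffer -- would be forced to observe $w$ or a coherence-later write, contradicting $(w',r)\in rf$ and $(w',w)\in mo$.

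\textbf{Power.} Here we would first invoke the Batty et al.\ bridge~\cite{batty+12} from an operational trace of the Sarkar et al.\ model~\cite{sarkar+11} to an observationally equivalent consistent candidate execution; it remains only to supply $S'$. Under the mapping every SC access and SC fence is bracketed by (or is) a heavyweight \texttt{sync}, and cumulativity of \texttt{sync} makes the commit/propagation points of these barriers consistently ordered across all threads; we would let $S'$ order the SC events by the commit point of their associated \texttt{sync}. Axioms \axiom{S1}, \axiom{S2} and \axiom{S5}--\axiom{S7} then follow because \texttt{sync} orders program-order-adjacent accesses and is cumulative over precisely the synchronisation and coherence edges those axioms appeal to, and \axiom{S3a} follows because, for $r\xrightarrow{fr}w$ between SC atomics, if the \texttt{sync} of $w$ committed before that of $r$ then cumulativity would propagate $w$ to $r$'s thread before $r$ is satisfied, so $r$ could not read the coherence-earlier $w'$.

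\textbf{The hard part.} The $hb$/$mo$/fence bookkeeping is routine and largely recycles invariants of the cited proofs; the real work is \axiom{S3a} -- turning ``the SC read observed a coherence-stale write'' into ``the SC read is globally ordered before the fresher SC write''. On x86 this is a brief appeal to the store-buffer and coherence axioms of the Owens et al.\ model; on Power it requires a more delicate argument about barrier cumulativity and write propagation in the Sarkar et al.\ operational model, and is the step most likely to need care with corner cases: $fr$-edges incident to SC RMWs, and the $Fsb^?$/$sbF^?$-decorated variants of \axiom{S3a} in which one end of the chain is reached through an SC fence rather than an atomic (there one re-runs the argument with that fence's \texttt{MFENCE}/\texttt{sync} playing the role of the access's own barrier).
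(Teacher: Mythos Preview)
Your high-level strategy is sound, and on x86 it is essentially equivalent to the paper's, though packaged differently. You construct a witness $S'$ from the x86 memory order and verify \axiom{S1}, \axiom{S2}, \axiom{S3a}, \axiom{S5}--\axiom{S7} one by one; the paper instead works directly with the acyclicity form of \axSsimp{} and argues by contradiction that any putative cycle in $\mosc^2 \setminus id \cap (Fsb^?;(hb\cup fr\cup mo);sbF^?)$ projects to a cycle in the x86 $\memord$. The paper's handling of $fr$ is slightly different from yours: rather than showing each $fr$ edge is itself in $\memord$ (your \axiom{S3a} argument), it observes that the source of an $fr$ edge is always $hb$-preceded by a write and uses totality of $\memord$ on writes to stitch the cycle shut. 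Both routes work; yours is more modular, the paper's avoids having to name $S'$ at all.

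On Power your plan is correct but does more work than necessary. You propose to build $S'$ from the commit points of the \texttt{sync}s and then re-derive the $fr$ case via a cumulativity/propagation argument. The paper instead notes that the Batty et al.\ Power soundness proof already establishes an invariant called $\goodsc$: a strict total order on the SC atomics that \emph{already contains} $po$, $co$, $fr$ and (an extended) $rf$, and into which the SC fences are then inserted consistently with edges of the form $[\mosc];po^?;(rf^{-1})^?;co;rf^?;po^?;[F\cap\mosc]$ and its dual. Since $mo$ and $rf$ in the constructed C11 execution are taken directly from the Power trace, and since the cited proof already shows $hb$ restricted to SC events lies in this order, \axSsimp{} follows immediately from $\goodsc$ plus the fence-insertion step---no fresh cumulativity reasoning is needed. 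So the ``hard part'' you flag for Power is in fact already discharged inside the cited proof; recognising this is the main simplification you are missing.
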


\begin{remark}[Soundness of the ARMv8 compilation scheme]
At the time of writing, work to formalise the ARMv8 specification, and
how it implements C11, is ongoing~\cite{flur+16}. We understand that
it is not currently clear whether the specification is intended to
allow or forbid behaviours like our
Example~\ref{ex:c11_candidate_exec}, and whether the effects of this
decision on the C11 memory model are understood. As such, we see our
work as a timely intervention in the ongoing argument about how this
particular aspect of the ARMv8 specification should evolve and be
formalised.
\end{remark}

\subsection{Effect on the Standard}
\label{sec:c11_proposal} 
We give below a suggestion for how the wording of the standard could
be changed to accommodate our proposal. Our text, which replaces
paragraphs 6 and 9--11 of section 7.17.3, is considerably shorter
(\CnewWORDS{} words rather than \ColdWORDS{}) while preserving the
style and terminology of the original. We have retained the total
order $S$ in our wording, because we believe it is more intuitive for
programmers than an acyclicity condition. Nonetheless, we enable
efficient simulation of this model via the \axSsimp{}
axiom (which is equivalent to the total order formulation, thanks to Lemma~\ref{lem:oep} with $r$ instantiated to $Fsb^? ; (hb \cup fr \cup mo) ; sbF^?$).
\begin{oframed}
\topsep=0pt
\begin{enumerate}
\item A value computation $A$ of an object $M$ \emph{reads before} a side
effect $B$ on $M$ if $B$
follows, in the modification order of $M$, the side effect that $A$
observes. 
\item If $X$ reads before $Y$, or happens before $Y$, or precedes $Y$
in modification order, then $X$ (and any fences sequenced
before $X$) is \emph{SC-before} $Y$ (and any fences sequenced
after $Y$).
\item There shall be a single total order $S$ on all {\tt memory\_}
{\tt order\_seq\_cst} operations, consistent with the SC-before
order.\end{enumerate}
\end{oframed}

\paragraph{Summary}
This section has described how, having strengthened the
original set of axioms (\axiom{S1} through \axiom{S7}) to use
Vafeiadis et al.'s \axiom{S4a} in place of \axiom{S4}, the behaviour
of SC operations can be captured by a single axiom (\axSpartial) that
allows the total order $S$ to be eliminated from the model. Moreover,
if the axioms are further strengthened to use our \axiom{S3a} in place
of \axiom{S3}, then that axiom can be greatly simplified (\axSsimp),
while still respecting current compilation schemes.

\section{Formalising the OpenCL Memory Model}
\label{sec:opencl.cat}

A principal aim of the OpenCL initiative is to provide functional
portability across a plethora of heterogenous many-core devices.  The
standard is implemented by CPU, GPU and FPGA vendors, and aims to
allow applications to be device-agnostic.  The OpenCL memory model,
introduced in the 2.0 revision of the standard, is inherited from that
of C11, but is specialised and extended for heterogeneous programming.
The memory model is the sole mechanism for correctly implementing
fine-grained concurrent algorithms in a device-agnostic manner.
Rigorous foundations for this model are thus vital.

We now describe how our formalisation of the C11 memory model
(\S\ref{sec:c11.cat}, \S\ref{sec:sc}) can be extended to yield the first mechanised
formalisation of the full OpenCL memory model.
We describe the form of OpenCL programs (\S\ref{sec:opencl_programs}),
their executions (\S\ref{sec:opencl_executions}), and the
axioms against which these executions are judged
(\S\ref{sec:opencl_axioms}). We then discuss some interesting
features of the memory model: some innocuous quirks
(\S\ref{sec:opencl_quirks}) and some serious shortcomings
(\S\ref{sec:opencl_problems}). The most serious shortcoming relates to
the axioms that govern SC atomics, and we propose how to fix this in \S\ref{sec:opencl_sc}.

For reasons of space, and because they are orthogonal to the thrust of
our contributions, we omit our treatment of barrier synchronisation
operations and the associated issue of \emph{barrier divergence}.  As
with the omitted C11 features mentioned in \S\ref{sec:c11.cat}, our
\cat{}-based formalisation of the OpenCL memory model, provided on our
companion webpage~\cite{overhauling-companion}, fully
accounts for these features.

\subsection{OpenCL Programs}
\label{sec:opencl_programs}

\begin{definition}[Structure of OpenCL programs]
\label{def:opencl_syntax}
Building on Def.~\ref{def:c11_syntax}, we consider OpenCL programs of
the form
\begin{center}
$P = \bigquadparallel_{d\in D} \bigtripleparallel_{w\in W}
\bigparallel_{t\in T} p_{d,w,t}$
\end{center}
where $D$, $W$, and $T$ are sets of device, work-group, and thread
identifiers, and each $p_{d,w,t}$ is a piece of
sequential code. 
\end{definition}

Using the notation above, we can write $p\quadparallel p'$ to denote a
litmus test comprising two threads to be executed on different
devices, $p\tripleparallel p'$ for two threads in different
work-groups in the same device, and $p\parallel p'$ for two threads in
the same work-group. We can also write, for example,
$p_1 \parallel p_2 \tripleparallel p_3 \parallel p_4 \quadparallel p_5
\parallel p_6 \tripleparallel p_7 \parallel p_8$,
to denote a litmus test comprising two devices, each executing two
work-groups, each containing two threads.

\begin{remark}[Limitations]
This program structure does not account for \emph{sub-groups}, an
optional extension in OpenCL 2.0 that allows threads to synchronise
with one another at a level of granularity finer than that of a
work-group,\footnote{Sub-groups have become a core feature in the
recent OpenCL~2.1 specification~\cite[p.~22]{opencl21}.} nor for
further non-OpenCL threads (e.g., POSIX threads) running on the host
platform.
Moreover, $w$ and $t$ can actually be 1-, 2-, or 3-dimensional
vectors, but we make the simplifying assumption that all identifiers
are natural numbers.
\end{remark}

Recall that locations in C11 are either $\nonatomic$ or $\atomic$
(Def.~\ref{def:c11_locs}). OpenCL locations are further declared to reside in a \emph{memory region}.

\begin{definition}[Memory regions]
\label{def:opencl_regions}
We have $region(l) \in \{"local",$ $"global", "global\_fgb"\}$ for every
location $l$, where "fgb" stands for
\underline{\smash{f}}ine-\underline{\smash{g}}rained shared virtual memory (SVM)
\underline{\smash{b}}uffer.\footnote{OpenCL also provides "private" regions,
each accessible only to one thread, and a read-only "constant" region,
but neither of these are interesting from a memory modelling
perspective.} There is one "local" region per work-group, containing
locations accessible only to that work-group. Locations in the
"global" or "global\_fgb" region are accessible to all devices. Fences
can be performed either on the global memories ("global" and
"global\_fgb") or on the local memory, or both simultaneously.
\end{definition}

The distinction between "global" and "global\_fgb" locations is that
the former must not be shared between different devices, while the
latter enable inter-device communication. Unlike C11, in which any
memory location can be shared between threads, the OpenCL memory model
physically prevents certain sharing patterns. For instance, threads
from different devices are \emph{forbidden} from conflicting on
"global" memory, but are \emph{able} to do so as a result of a
programmer fault; in contrast, threads from different work-groups are
\emph{unable} to conflict on local memory: the language provides no
mechanism through which such a conflict can arise.
%
%
%
\begin{definition}[Memory scopes]
\label{def:opencl_scopes}
Atomics in OpenCL are parameterised by a \emph{memory scope}. The
three options are
\[
\begin{array}{r@{~}ll}
s ::= & \swg & \text{(work-group scope)} \\
 \mid & \sdv & \text{(device scope)} \\
 \mid & \sall & \text{(system scope).}
\end{array}
\]
A memory scope specifies how widely visible the effects of the
operation should be.
\end{definition}
\begin{Example}
\label{ex:openclmp}
The use of memory scopes is illustrated by the
following code, which implements the message-passing idiom between two
threads in the same work-group.
\begin{center}
\vspace*{-1mm}
\begin{tabular}{l@{~~}||@{~~}l}
\multicolumn{2}{c}{\texttt{global int *x; global atomic\_int *y;}} \\
\texttt{*x = 42;} & \texttt{if(load(y,$\moacq$,$\swg$)==1)} \\
\texttt{store(y,1,$\morel$,$\swg$);} & \texttt{~~r = *x;}
\end{tabular}
\vspace*{-1mm}
\end{center}
Since all accesses to the global location {\tt y} come from the same
work-group, those accesses can be performed at $\swg$ scope (which
means that on implementations where each work-group caches global
memory, it suffices to read/write those cached values). This scope
would be insufficient, and the program deemed faulty, if the threads
were in different work-groups -- both scopes would have to be upgraded
to $\sdv$.
\end{Example}

\subsection{OpenCL Executions}
\label{sec:opencl_executions}

OpenCL executions extend C11 executions as follows.

\begin{definition}[OpenCL event labels]
We extend C11 event labels (Def.~\ref{def:c11_event_labels}) with an
additional $scope$ attribute, which assigns a memory scope $s$ to all
atomic events. We also subdivide the $\evF$ label in order to
represent fences on global ($\evFG$), local ($\evFL$) and both-global-and-local
memory ($\evFGL$). The updated table is as follows:
\[
\begin{array}{l@{~}l@{}ccccc@{}l|cccc}
kind & & loc & rval & wval & ord & 
scope & & R & W & F & A \\ \hline
\evWna &(& l, & & v, & & 
&)& & \checkmark & & \\
\evW &(& l, & & v, & o, & 
s &)& & \checkmark & & \checkmark \\
\evRna &(& l, & v, & & & 
&)& \checkmark & & & \\ 
\evR &(& l, & v, & & o, & 
s &)& \checkmark & & & \checkmark\\ 
\evRMW &(& l, & v, & v', & o, & 
s &)& \checkmark & \checkmark & & \checkmark\\ 
\evFG &(& & & & o, & s &)& & & \checkmark & \checkmark \\
\evFL &(& & & & o, & s &)& & & \checkmark & \checkmark \\
\evFGL &(& & & & o, & s &)& & & \checkmark & \checkmark 
\end{array}
\]

\end{definition}
\begin{definition}[OpenCL executions] 
\label{def:opencl_executions}
An OpenCL execution is a tuple $(E, I, lbl, thd, wg, dv, sb)$ where
$(E, I, lbl, thd, sb)$ is a C11 execution as in
Def.~\ref{def:c11_executions}, and $wg,dv\subseteq(E\setminus I)^2$
are equivalence relations on non-initial events that relate events
from the same work-group and device, respectively. In order to enforce
the privacy of "local" locations to a single work-group, we require
that if $loc(e) = loc(e') = l$ and $region(l) = "local"$, then
$(e,e')\in wg$. 
\end{definition}
\begin{definition}[Derived sets and relations]
In the context of an OpenCL execution $(E,I,lbl,thd,$ $wg,dv,sb)$, we
define
\begin{align*}
fgb &~\eqdef~ \{e\in E\setminus F \mid region(loc(e)) = "global\_fgb"\} \\
G &~\eqdef~ \stack{\{e \in F \mid kind(e) \in \{\evFG, \evFGL\}\} \cup {}\\ \{e\in E\setminus F \mid
region(loc(e)) = "global"\} \cup fgb} \\
L &~\eqdef~ \stack{\{e \in
F \mid kind(e) \in \{\evFL, \evFGL\}\}
\cup {}\\ \{e\in E\setminus F \mid region(loc(e)) = "local"\} }
\end{align*}
as the sets of events that access, respectively: fine-grained atomic
SVM buffers, global memory, and local memory. Also, for each scope $s$, we abbreviate the set $\{e \in A \mid scope(e) = s\}$ as
just $s$.
\end{definition}

\begin{definition}[OpenCL candidate executions] Candidate executions in
OpenCL, and their well-formedness, are defined in the same way as in
C11 (Def.~\ref{def:candidate_exec}).
\end{definition}

\subsection{OpenCL Axioms}
\label{sec:opencl_axioms}

We now define and discuss the $\consistent$ and $\faulty$ predicates
for the OpenCL memory model, paying particular attention to each of
the departures from C11. We justify our formal definitions by
reference to the OpenCL specification~\cite{opencl21}, writing $n$/$m$
to denote line $m$ on page $n$.

\begin{definition}[Further derived sets and relations]
\label{def:opencl_derived}
In the context of a candidate execution $(E,I,lbl,thd,sb,rf,mo,S)$, we
define the following subsets of $E$ and relations over $E$:
\begin{eqnarray*}
incl &\eqdef& (\swg^2 \cap wg) \cup (\sdv^2 \cap dv) \cup \sall^2
\\
rsw(r) &\eqdef& \stack{([r \cap rel] ; Fsb^?
; [W \cap A] ; rs^? ; [r] ; rf ; {}\\{} [R \cap A] ;  sbF^?; [r \cap acq]) \cap incl \setminus
thd}
\\
gsw &\eqdef& rsw(G) \cup (rsw(L) \cap (\mosc^2 \cup (G \cap L \cap F)^2))
\\
lsw &\eqdef& rsw(L) \cup (rsw(G) \cap (\mosc^2 \cup (G \cap L \cap F)^2))
\\
ghb &\eqdef& (G^2 \cap (sb \cup (I \times \neg I)) \cup
gsw)^+ 
\\
lhb &\eqdef& (L^2 \cap (sb \cup (I \times \neg I)) \cup
lsw)^+ 
\\
ghbl &\eqdef& ghb \cap {=_{loc}} \\
lhbl &\eqdef& lhb \cap {=_{loc}} \\
gvis &\eqdef& (W \times R) \cap ghbl \setminus(ghbl ; [W] ; ghb) 
\\
lvis &\eqdef& (W \times R) \cap lhbl \setminus(lhbl ; [W] ; lhb)
\\
hr &\eqdef& \stack{cnf \setminus (ghb \cup lhb) \setminus (ghb \cup lhb)^{-1} 
\setminus incl \setminus thd}
\\
iddr &\eqdef& cnf \setminus dv \setminus fgb^2
\\
\sccondI &\eqdef& \neg(E^2 ; [\mosc \setminus (\sall \cap fgb)] ; E^2)
\\
\sccondII &\eqdef& \neg(E^2 ; [\mosc \setminus (\sdv \setminus
fgb)] ; E^2)
\end{eqnarray*}
\end{definition}
\begin{commentary}
In OpenCL, only events that have \emph{inclusive} scopes ($incl$) can
synchronise: either the events have $\swg$ scope and are in the same
work-group, or they have $\sdv$ scope and are in the same device, or
they have $\sall$ scope.\footnote{\citecl{47}{16}{47}{26}} We shall
explain in \S\ref{sec:opencl_problems} how this notion of scope
inclusion is unnecessarily conservative.

The synchronisation relation ($rsw$) is parameterised by a region $r$
(global or local). The global synchronises-with relation ($gsw$)
includes events that synchronise on global
memory,\footnote{\citecl{51}{1}{51}{9}} but also includes events that
synchronise on \emph{local} memory, providing both events have memory
order $\mosc$,\footnote{\citecl{51}{32}{51}{33}} or both are
global-and-local fences.\footnote{\citecl{54}{13}{54}{16}} Local
synchronises-with ($lsw$) is analogous. 
Example~\ref{ex:fence_gl} shows how synchronisation works in
the presence of global-and-local fences.

Happens-before is partitioned into global and local versions: global
happens-before ($ghb$) contains global synchronises-with and
sequenced-before edges between events on global
memory,\footnote{\citecl{49}{3}{49}{7}} and local happens-before
($lhb$) is analogous.\footnote{\citecl{49}{8}{49}{11}} See
Example~\ref{ex:global-local-mp} for a discussion of the repercussions
of this definition of happens-before. Visibility is also split into
global ($gvis$) and local ($lvis$)
versions.\footnote{\citecl{49}{21}{49}{26}}

The \emph{heterogeneous race} ($hr$)\footnote{This terminology is due
to Hower et al.~\cite{hower+14}.} generalises C11's data race ($dr$,
Def.~\ref{def:c11_further_derived}), to reflect the fact that in
OpenCL, even atomic operations can race when memory scopes are used
incorrectly.\footnote{\citecl{49}{29}{49}{33}} If two events from
different devices conflict on a location that is not in a fine-grained
atomic SVM buffer, then they form an \emph{inter-device data race}
($iddr$); such races cannot be ruled out by happens-before
edges.\footnote{\citecl{58}{24}{58}{27}}

This leaves the $\sccondI$ and $\sccondII$ relations. In
OpenCL, the total order $S$ is only required to exist when
\begin{gather*}
\mosc \subseteq \sall \cap fgb \quad\text{or}\quad \mosc \subseteq
\sdv \setminus fgb.
\end{gather*}
The first condition holds when every SC event has $\sall$ scope and
accesses a "global\_fgb" location;\footnote{\citecl{51}{15}{51}{17}}
the second holds when every SC event has $\sdv$ scope and does
\emph{not} access a "global\_fgb"
location.\footnote{\citecl{51}{18}{51}{20}} The relation
$\sccondI$ (resp.~$\sccondII$) is the universal relation if the
first (resp.~$\sccondII$) condition holds and is the empty relation
otherwise. In \S\ref{sec:opencl_problems}, we shall criticise
these conditions as being simultaneously too strong for programmers and
too weak for compiler-writers.
\end{commentary}

\begin{definition}[Consistency axioms in OpenCL]
\label{def:opencl_consistency_axioms}
There are nine consistency axioms. Departures from the C11 consistency
axioms (Def.~\ref{def:consistency_axioms}) are highlighted.
{
\renewcommand\jot{1.5mm}
\begin{gather}
\tag{\oaxiom{HbG}}
\irreflexive(\mhl{ghb})
\\
\tag{\oaxiom{HbL}}
\irreflexive(\mhl{lhb})
\\
\tag{\oaxiom{CohG}}
\irreflexive((rf^{-1})^? ; mo ; rf^? ; \mhl{ghb})
\\
\tag{\oaxiom{CohL}}
\irreflexive((rf^{-1})^? ; mo ; rf^? ; \mhl{lhb})
\\
\tag{\oaxiom{Rf}}
\irreflexive(rf ; \mhl{(ghb\cup lhb)})
\\
\tag{\oaxiom{NaRfG}}
\isempty ((rf ; [\mhl{G \cap {}} nal]) \setminus \mhl{gvis})
\\
\tag{\oaxiom{NaRfL}}
\isempty ((rf ; [\mhl{L \cap {}} nal]) \setminus \mhl{lvis})
\\
\tag{\oaxiom{Rmw}}
\irreflexive (rf \cup (mo;mo;rf^{-1}) \cup (mo;rf))
\\
\tag{\axOSsimp}
\acyclic(\stack{\mosc^2 \setminus id \cap \mhl{(\sccondI \cup \sccondII)} \cap {}\\
(Fsb^? ; (\mhl{ghb \cup lhb} \cup fr \cup mo) ; sbF^?))}
\end{gather}
}
\end{definition}

\begin{commentary} 
Both happens-before relations are required to be
acyclic~(\oaxiom{HbG},
\oaxiom{HbL}).\footnote{\citecl{49}{12}{49}{13}} OpenCL requires
coherence for both global and local happens before
separately~(\oaxiom{CohG},
\oaxiom{CohL}).\footnote{\citecl{50}{11}{50}{24}} The axioms governing
the reads-from relation are carried over from C11~(\oaxiom{Rf},
\oaxiom{NaRfG}, \oaxiom{NaRfL}, \oaxiom{Rmw}), but appropriately
divided into global and local
versions.\footnote{\citecl{49}{26}{49}{27}, \citecl{50}{8}{50}{9},
\citecl{52}{22}{52}{23}}

OpenCL defines the same SC axioms that we saw in
Def.~\ref{def:consistency_axioms} (\axiom{S1}--\axiom{S7}), but uses
$ghb\cup lhb$ in place of $hb$. We have incorporated into axiom
\axOSsimp{} the simplifications that we already discussed in the
context of C11 (\S\ref{sec:sc}). Intersecting with the $\sccondI$
and $\sccondII$ conditions means that the acyclicity constraint is
only enforced when one of those conditions holds.\footnote{\citecl{51}{14}{51}{14}}
\end{commentary}

\begin{definition}[Faultiness in OpenCL]
A candidate OpenCL execution is faulty if it is consistent and does
\emph{not} satisfy both of the following axioms:
{
\renewcommand\jot{1.5mm}
\begin{gather}
\tag{\oaxiom{Hr}} 
\isempty(hr)
\\
\tag{\oaxiom{Iddr}} 
\isempty(iddr)
\end{gather}
}
\end{definition}

\subsection{Quirks in the Memory Model}
\label{sec:opencl_quirks}

We present three worked examples that illustrate features of the
memory model that may not be obvious from a cursory glance at its
axioms. These `quirks' in the model are distinguished from the
technical shortcomings that we save for \S\ref{sec:opencl_problems}.

Our first example illustrates an interesting consequence of
OpenCL's separation of happens-before into two distinct relations.

\begin{Example}
\label{ex:global-local-mp}
Suppose the code of Example~\ref{ex:openclmp} were changed so that
{\tt y} were declared {\tt local} rather than {\tt global}. Executions
such as the one below would then become consistent, which means that
a stale value of "x" can be read (event $f$), even when successful
release/acquire synchronisation (between $d$ and $e$) has occurred.
\begin{center}
\begin{tikzpicture}[inner sep=1pt]
\node[event, anchor=west](a) at (0.5,1.8) 
{\evtlbl{$a$}$\evWna({\tt x},0)$};

\node[event, anchor=west](b) at (2.5,1.8) 
{\evtlbl{$b$}$\evWna({\tt y},0)$};

\node[event, anchor=west](c) at (0,1) 
{\evtlbl{$c$}$\evWna({\tt x},42)$};

\node[event, anchor=west](d) at (0,0.2) 
{\evtlbl{$d$}$\evW({\tt y},1,\morel,\swg)$};

\node[event, anchor=west](e) at (3,1) 
{\evtlbl{$e$}$\evR({\tt y},1,\moacq,\swg)$};

\node[event, anchor=west](f) at (3,0.2) 
{\evtlbl{$f$}$\evRna({\tt x},0)$};

\draw[edgesb] (c.south) to[auto,pos=0.4]
node{$sb$} (d.north -| c.south);

\draw[edgesb] (e.south) to[auto,swap,pos=0.4]
node{$sb$} (f.north -| e.south);

\draw[edgemo] (a) to[auto,pos=0.7,swap] node{$mo$} (c);

\draw[edgemo] (b) to[auto,pos=0.1]
node{$mo$} (d);

\draw[edgerf] (d) to[auto,pos=0.4]
node{$rf$} (e);

\draw[edgerf] (a) to[auto,pos=0.2]
node{$rf$} (f);
\end{tikzpicture}
\end{center}
This execution is consistent because the $sb$ edges no longer induce
either variety of happens-before, since they link events that act on
different memory regions. Worse still, there is now a data race
between $c$ and $f$, which renders the entire program undefined.
\end{Example}

We learn from Example~\ref{ex:global-local-mp} that a flag in one
memory region cannot be used to protect data in another region. To
address this issue, OpenCL provides fences that act on both global
\emph{and} local memory simultaneously. These are illustrated in Example~\ref{ex:fence_gl}.

\begin{Example} 
\label{ex:fence_gl}
The following program uses relaxed ($\morlx$) accesses on the local flag {\tt
y}, relying instead on the fences to synchronise the threads and
enable the global data {\tt x} to be passed.
\begin{center}
\begin{tabular}{@{}l@{~}||@{~}l@{}}
\multicolumn{2}{c}{\texttt{global int *x; local atomic\_int *y;}} \\
\texttt{*x = 42;} & \texttt{if(load(y,$\morlx$,$\swg$)==1)} \\
\texttt{fence(GL,$\morel$,$\swg$);} & \texttt{\{~fence(GL,$\moacq$,$\swg$);} \\
\texttt{store(y,1,$\morlx$,$\swg$);} &
\texttt{~~r = *x;~\} } 
\end{tabular}
\end{center}
The "fence" instructions successfully prevent the stale value of "x"
being read, because the following execution is inconsistent.
\begin{center}
\begin{tikzpicture}[baseline=10mm,inner sep=1pt]
\node[event, anchor=west](a) at (0.5,2.15) 
{\evtlbl{$a$}$\evWna({\tt x},0)$};

\node[event, anchor=west](b) at (2.4,2.15) 
{\evtlbl{$b$}$\evWna({\tt y},0)$};

\node[event, anchor=west](c) at (0,1.5) 
{\evtlbl{$c$}$\evWna({\tt x},1)$};

\node[event, anchor=west](d) at (0,0.75)
{\evtlbl{$d$}$\evFGL(\morel,\swg)$};

\node[event, anchor=west](e) at (0,0) 
{\evtlbl{$e$}$\evW({\tt y},1,\morlx,\swg)$};
\node[event, anchor=west](f) at (3,1.5) 
{\evtlbl{$f$}$\evR({\tt y},1,\morlx,\swg)$};

\node[event, anchor=west](g) at (3,0.75)
{\evtlbl{$g$}$\evFGL(\moacq,\swg)$};

\node[event, anchor=west](h) at (3,0) 
{\evtlbl{$h$}$\evRna({\tt x},0)$};

\draw[edgesb] (c) to[auto,swap,pos=0.4] 
node{$sb$} (d.north -| c.south);

\draw[edgesb] (d.south -| c.south) to[auto,swap]
node{$sb$} (e.north -| c.south);
\draw[edgesb] (f) to[auto,swap,pos=0.4] 
node{$sb$} (g.north -| f.south);

\draw[edgesb] (g.south -| f.south) to[auto,swap] 
node{$sb$} (h.north -| f.south);

\draw[edgerf] ([xshift=-3mm]e.north east) to[auto,pos=0.15, swap, bend
left=10] 
node{$rf$} (f);

\draw[edgerf] (a) to[auto,pos=0.3, swap, bend right=10] 
node{$rf$} (h);

\draw[edgemo] (a) to[auto, pos=0.7, swap] 
node{$mo$} (c);

\draw[edgemo] (b) to[auto, pos=0.3, swap] 
node{$mo$} ([xshift=-4mm]e.north east);
\end{tikzpicture}
\end{center}
The execution is inconsistent because it has a cycle
$h \xrightarrow{rf^{-1}} a \xrightarrow{mo} c \xrightarrow{ghb}
h$, in violation of \oaxiom{CohG}.
Note that $c \xrightarrow{ghb} h$ holds here because, firstly, $(d,g)$ is
in $\var{rsw}(L)$ and hence in $gsw$ and $ghb$, and secondly, $(c,d)$
and $(g,h)$ are both in $sb \cap G^2$ and hence in $ghb$.
\end{Example}

In Example~\ref{ex:work_item_scope}, we illuminate the relationship
between memory scopes and non-atomic operations. Since scopes can be
used to limit atomic operations to certain groups of threads, it is
tempting to introduce an additional `work-item' scope, $\swi$, and
encode non-atomic events as atomic events whose scope is limited to
the current thread. This would make the $\evWna$ and $\evRna$ labels
redundant. An ordinary data race can then be cast as a failure of
scope inclusion. However, the differences between non-atomic and
atomic operations go beyond racy behaviours, as we shall see in the
following example.

\begin{Example}
\label{ex:work_item_scope}
Consider the following load-buffering litmus test:
\begin{center}
\begin{tabular}{l@{~~}||@{~~}l}
\multicolumn{2}{c}{\texttt{global int *x, *y;}} \\
\texttt{if (*x==1) *y=1;} & \texttt{if (*y==1) *x=1;}
\end{tabular}
\end{center}
The execution of this program that exhibits the relaxed behaviour, in
which both comparisons succeed (shown below left), is \emph{not
consistent}: both of its reads observe writes that are not visible, in
violation of the \axiom{NaRf} axiom. If the non-atomic locations
become atomic and the non-atomic operations become work-item-scoped
atomics, then this relaxed behaviour (shown below right) becomes
\emph{consistent}, since the \axiom{NaRf} restriction no longer
applies.
\begin{center}
\begin{tikzpicture}[inner sep=1pt]
\node[event, anchor=west](c) at (0,1) 
{\scriptsize $\evRna("x",1)$};

\node[event, anchor=west](d) at (0,0) 
{\scriptsize $\evWna("y",1)$};

\node[event, anchor=west](e) at (2.3,1) 
{\scriptsize $\evRna("y",1)$};

\node[event, anchor=west](f) at (2.3,0) 
{\scriptsize $\evWna("x",1)$};

\draw[edgesb] (c.south) to[auto,pos=0.4]
node{$sb$} (d.north -| c.south);

\draw[edgesb] (e.south) to[auto,swap,pos=0.4]
node{$sb$} (f.north -| e.south);

\draw[edgerf] (d) to[auto,pos=0.3,swap] node{$rf$} (e);

\draw[edgerf] (f) to[auto,pos=0.7,swap] node{$rf$} (c);
\end{tikzpicture}
\quad
\begin{tikzpicture}[inner sep=1pt]
\node[event, anchor=west](c) at (0,1) 
{\scriptsize $\evR("x",1,\morlx,\swi)$};

\node[event, anchor=west](d) at (0,0) 
{\scriptsize $\evW("y",1,\morlx,\swi)$};

\node[event, anchor=west](e) at (2.3,1) 
{\scriptsize $\evR("y",1,\morlx,\swi)$};

\node[event, anchor=west](f) at (2.3,0) 
{\scriptsize $\evW("x",1,\morlx,\swi)$};

\draw[edgesb] (c.south) to[auto,pos=0.4]
node{$sb$} (d.north -| c.south);

\draw[edgesb] (e.south) to[auto,swap,pos=0.4]
node{$sb$} (f.north -| e.south);

\draw[edgerf] (d) to[auto,pos=0.3,swap] node{$rf$} (e);

\draw[edgerf] (f) to[auto,pos=0.7,swap] node{$rf$} (c);
\end{tikzpicture}
\end{center}
\end{Example}

\subsection{Problems with the Memory Model}
\label{sec:opencl_problems}

We present three shortcomings in the OpenCL memory model, which we
discovered as a direct result of our formalisation efforts.

\paragraph{Scope inclusion is too strong}

The specification provides an overly conservative notion of scope
inclusion: two events only have inclusive scopes if their 
scopes match exactly. This leads to such surprises as the following example.

\begin{Example}
\label{ex:asymmetric_scopes}
Suppose the code of Example~\ref{ex:openclmp} were changed so that
the store to {\tt y} now occurs at $\sdv$ scope, but the load of {\tt
y} remains at $\swg$ scope. Although the release scope is clearly `wide
enough', it does not match the acquiring scope, so no synchronisation
edge is induced. This leads to two data races: both between the non-atomic
accesses of {\tt x}, and between the ill-scoped atomic accesses of
{\tt y}.
\end{Example}

A resolution proposed by Gaster et al. is to allow the annotated
scopes to differ, as long as both are sufficiently
wide~\cite[\S3.1]{gaster+15}. This enables, for instance, a
$\sdv$-scoped write to synchronise with a $\swg$-scoped read in the
same work-group. The proposal can be formalised in our framework by
changing the definition of the $incl$ relation
(Def.~\ref{def:opencl_derived}) as follows:
\begin{eqnarray*}
incl\textit{1} &\eqdef& ([\swg] ; wg)
\cup ([\sdv] ; dv) \cup ([\sall] ; E^2) 
\\
new\mhyphen{}incl &\eqdef& incl\textit{1} \cap incl\textit{1}^{-1}
\end{eqnarray*}
The idea here is to define a one-sided version of scope inclusion
first, so that $(e_1,e_2)$ is in $incl\textit{1}$ if $e_1$ has a wide
enough scope to `reach' $e_2$. Requiring this to hold in both
directions ensures that both events have sufficient scopes, if not
necessarily the same.


\paragraph{The SC axioms are too weak}
As encoded in our \axOSsimp{} axiom
(Def.~\ref{def:opencl_consistency_axioms}), SC operations in OpenCL
are only guaranteed to provide SC behaviour when one of the $\sccondI$
and $\sccondII$ conditions holds. Since these are conditions on the
whole program, we have a ``clear composability
problem''~\cite{gaster+15}.

We find several reasons why these conditions are problematic. First,
they mean that the default memory scope (which is $\sdv$) is not
sufficient to ensure SC semantics in all situations. Second, any
program that includes a $\swg$-scoped SC atomic, such as
"store(x,1,SC,WG)", immediately violates the conditions. Third, the
conditions are mutually exclusive, so a program that satisfies $\sccondI$ can be combined with another that
satisfies $\sccondII$, with the result satisfying neither. Finally,
consider the following example.

\begin{Example}
\label{ex:opencl_sb} 
The following program, comprising two threads in different work-groups
on the same device, has SC semantics, which means that it cannot
exhibit the relaxed behaviour $"r0" = "r1" = 0$:
\begin{center}
\begin{tabular}{l@{~~}|||@{~~}l}
\multicolumn{2}{c}{\texttt{global atomic\_int *x, *y;}} \\
\evtlbl{$a$}\texttt{store(x,1);} & \evtlbl{$c$}\texttt{store(y,1);} \\
\evtlbl{$b$}\texttt{r0 = load(y);} & \evtlbl{$d$}\texttt{r1 = load(x);}
\end{tabular}
\end{center}
Note that the atomic "store" and "load" operations default to the "SC"
memory order and the $\sdv$ memory scope, and that condition
$\sccondII$ holds. However, if "global" is changed to "global\_fgb",
the relaxed behaviour becomes permissible, because neither condition
$\sccondI$ nor $\sccondII$ holds. Condition $\sccondII$ no longer
holds now that "x" and "y" are in fine-grained atomic SVM buffers, and
condition $\sccondI$ does not hold either because the $\sall$ scope is
not being used. 
\end{Example}

It is jarring that such a small change, from "global" to "global\_"
"fgb", can legitimise relaxed behaviours. Worse still, such a change
may be invisible to the programmer, if they can see only the kernel
code: the assignment of locations to SVM buffers occurs only on the
host side, and such locations are only marked in a kernel as "global".

\paragraph{The SC axioms are too strong}
Following discussion with members of the Khronos OpenCL working group,
we understand that the purpose of condition $\sccondII$ is to enable
efficient implementations of $\sdv$-scoped SC atomics. The intention
of the condition is that if no SC atomic accesses memory shared
between devices, they can be implemented without expensive
inter-device synchronisation. It was thought not to matter that the
specification requires implementations to establish a total order
between SC events on different devices, because it is not possible to
observe this order without creating an inter-device data race.

In fact, this is not the case. We present in
Example~\ref{ex:twisted_sb} a program that satisfies condition
$\sccondII$, and yet is still able to observe the order between SC
events in different devices -- even though these events are
$\sdv$-scoped and access no memory shared between devices.

\begin{Example} 
\label{ex:twisted_sb}
Consider the following program, which comprises two devices, both
executing two threads (stacked vertically). It can be thought of as a
`twisted' version of the store-buffering test.
\begin{center}
\begin{tabular}{l@{~~}||||@{~~}l}
\multicolumn{2}{c}{\texttt{global atomic\_int *x, *y;}} \\
\multicolumn{2}{c}{\texttt{global\_fgb atomic\_int *z1, *z2;}} \\
{\begin{tabular}{@{}l@{}}
\texttt{store(x,1,$\mosc$,$\sdv$);} \\
\texttt{store(z1,1,$\morel$,$\sall$);} \\ \hline\hline
\texttt{r1 = load(z2,$\moacq$,$\sall$)?} \\
\texttt{~~load(x,$\mosc$,$\sdv$) : 1;}
\end{tabular}} & 
{\begin{tabular}{@{}l@{}}
\texttt{store(y,1,$\mosc$,$\sdv$);} \\
\texttt{store(z2,1,$\morel$,$\sall$);} \\ \hline\hline
\texttt{r2 = load(z1,$\moacq$,$\sall$)?} \\
\texttt{~~load(y,$\mosc$,$\sdv$) : 1;}
\end{tabular}}
\end{tabular}
\end{center}
Two threads in different devices write, using $\sdv$ scope, to
distinct "global" locations "x" and "y", and then write to
"global\_fgb" flags, using $\sall$ scope, to signal that they are
done. Meanwhile, two partner threads try to acquire these signals from
the opposite device, and if they are successful, they read the
location their partner (in the same device as they) wrote to. We are
interested in whether these reads can both obtain $0$; that is,
whether the final state $\{"r1"="r2"=0\}$ is allowed.  This final state
could only be obtained via the following execution:
\begin{center}
\begin{tikzpicture}[baseline=10mm,inner sep=1pt]
\node[event, anchor=west](a) at (0,0.75)
{\evtlbl{$a$}$\evW("x",1,\mosc,\sdv)$};

\node[event, anchor=west](c) at (0,0) 
{\evtlbl{$b$}$\evW("z1",1,\morel,\sall)$};

\node[event, anchor=west](d) at (0,-0.75) 
{\evtlbl{$c$}$\evR("z2",1,\moacq,\sall)$};

\node[event, anchor=west](e) at (0,-1.5) 
{\evtlbl{$d$}$\evR("x",0,\mosc,\sdv)$};

\node[event, anchor=west](f) at (4,0.75)
{\evtlbl{$e$}$\evW("y",1,\mosc,\sdv)$};

\node[event, anchor=west](h) at (4,0) 
{\evtlbl{$f$}$\evW("z2",1,\morel,\sall)$};

\node[event, anchor=west](i) at (4,-0.75) 
{\evtlbl{$g$}$\evR("z1",1,\moacq,\sall)$};

\node[event, anchor=west](j) at (4,-1.5) 
{\evtlbl{$h$}$\evR("y",0,\mosc,\sdv)$};

\draw[edgesb] (a) to[auto,swap,pos=0.4] 
node{$sb$} (c.north -| a.south);

\draw[edgesb] (d.south -| a.south) to[auto,swap]
node{$sb$} (e.north -| a.south);

\draw[edgesb] (f) to[auto,swap,pos=0.4] 
node{$sb$} (h.north -| f.south);

\draw[edgesb] (i.south -| f.south) to[auto,swap] 
node{$sb$} (j.north -| f.south);

\draw[edgerf] (c.east) to[auto,swap,pos=0.2]
node{$rf$} (i.west);

\draw[edgerf] (h.west) to[auto, pos=0.2]
node{$rf$} (d.east);

\draw[edgerb] (e.west) to[auto, bend left=30]
node{$fr$} (a.west);

\draw[edgerb] (j.east) to[auto, swap, bend right=70]
node{$fr$} (f.east);

\begin{pgfonlayer}{wglayer}
\node[wg, fit=(a)(c)] (wg1) {};
\node[wg, fit=(d)(e)] (wg2) {};
\node[wg, fit=(f)(h)] (wg4) {};
\node[wg, fit=(i)(j)] (wg5) {};
\end{pgfonlayer}

\begin{pgfonlayer}{dvlayer}
\node[dv, fit=(wg1)(wg2)] {};
\node[dv, fit=(wg4)(wg5)] {};
\end{pgfonlayer}

\end{tikzpicture}
\end{center}
where the outer dotted rectangles delimit $dv$ equivalence classes and
the inner ones delimit $thd$ equivalence classes.

The execution is inconsistent, and therefore must be forbidden by a
compiler. To see this, observe that each $rf$ edge induces a
synchronisation ($gsw$) edge, and hence global happens-before. Since
$sb$ edges also contribute to global happens-before, we obtain the
cycle
$a\xrightarrow{ghb} b\xrightarrow{ghb} g\xrightarrow{ghb}
h\xrightarrow{fr} e\xrightarrow{ghb} f\xrightarrow{ghb}
c\xrightarrow{ghb} d\xrightarrow{fr}a$.
This makes the execution fall foul of \axOSsimp, which is non-vacuous
here because the condition $\sccondII$ is satisfied.
\end{Example}

That the execution in the example above is not allowed implies that
OpenCL implementations must make the order of SC write operations
visible to all devices, even when those writes are only performed with
$\sdv$ scope. In other words, the current phrasing of the OpenCL
memory model demands too much from the compiler-writer to permit an
efficient implementation of $\sdv$-scoped SC atomics, while in other
respects offering too little to the programmer, by guaranteeing SC
semantics only when an onerous condition holds.

To summarise: the intent of the Khronos working group was to enable
efficient implementation of $\sdv$-scoped SC atomics by compilers, at
the expense of programmer inconvenience.  Instead, our formalisation
shows that we have the worst of both worlds: the programmer is
inconvenienced, and yet a correct compiler is obliged to enforce
inter-device orderings on $\sdv$-scoped SC atomics.

\section{Overhauling the SC Axioms in OpenCL}
\label{sec:opencl_sc}

We describe how the handling of SC atomics in
OpenCL can be changed to address the shortcomings identified in \S\ref{sec:opencl_problems}.

Building on a suggestion by Gaster et
al.~\cite[\S7.2]{gaster+15}, we propose to eradicate the stringent
conditions on the existence of the SC order by simply intersecting the
constraints on the SC order with the scope-inclusion relation. This
essentially means that the orderings imposed between events by the SC
axioms only take effect if those events have inclusive scopes. Under
this proposal, which recalls the way C11's synchronisation relation
($sw$, Def.~\ref{def:c11_further_derived}) is intersected with
scope-inclusion when producing OpenCL's version ($rsw$,
Def.~\ref{def:opencl_derived}), we do not need to restrict the
programmer's usage of SC atomics to certain scopes; instead, the
guarantees provided by those SC atomics degrade gracefully as their
scopes narrow.
\begin{definition}[Proposed SC axiom for OpenCL]
The following axiom for SC atomics in OpenCL is obtained from \axOSsimp{} by removing
the $\sccondI$ and $\sccondII$ conditions and instead
intersecting with $incl$:
\begin{gather}
\tag{\axOSscoped}
\acyclic(\mosc^2 \cap (Fsb^?; (ghb \cup lhb \cup fr \cup mo) ; sbF^?) \cap \mhl{incl})
\end{gather}
\end{definition}

\subsection{Effect on the Standard} 
\label{sec:opencl_proposal}
To accommodate our proposal, we propose that the wording of the OpenCL
2.1 standard \cite[51/14--31 and 51/34--52/13]{opencl21} be changed to
match the text given in \S\ref{sec:c11_proposal}, but with `happens
before' replaced with `global or local happens before', and
`consistent with the SC-before order' replaced with `consistent with
the SC-before order restricted to operations with inclusive
scopes'. This replaces \OoldWORDS{} words with \OnewWORDS{} words,
while retaining the standard's style and terminology.

\subsection{Implementability of the New SC Axiom}
\label{sec:opencl_implementability}

The new \axOSscoped{} axiom is stronger than the original
\axOSsimp{} axiom, so we must confirm that our
proposal does not place undue demands on compilers that implement the
memory model. 

The only published compilation scheme of the OpenCL 2.0 memory model
of which we are aware is that published by AMD~\cite{orr+15} and later
formalised by Wickerson et al.~\cite{wickerson+15a}. The scheme
compiles the release/acquire fragment of OpenCL atomics, and its
soundness has been verified against an operational model of an AMD
GPU~\cite{wickerson+15a}. In this subsection we describe how the
scheme can be extended to support SC atomics, and we demonstrate via a
series of examples that the extended scheme meets the requirements of
our revised SC axiom. The original compilation scheme does not cater
for multiple devices, and does not include fences, and we do not
attempt here to extend the scheme to cover these features. As such,
this scheme does not engage directly with the problems of inter-device
SC atomics that we noted in the previous section; however, it does
illustrate how $\swg$- and $\sdv$-scoped SC atomics can co-exist.

\begin{table}[t]
\centering
\begin{tabular}{l@{~~~}ll}
& OpenCL atomic operation & Assembly instructions
\\ \hline
\ding{182} & \tstack{$r = \texttt{load}(x,\mosc,\swg)$} & 
\tstack{$\INSld\,r\,x$} 
\\ 
\ding{183} & \tstack{$r = \texttt{load}(x,\mosc, \sdv)$} & 
\tstack{
\hl{$\INSinvall1$} {\tt ;}  
$\INSld\,r\,x$ {\tt ;}
$\INSinvall1$} 
\\ 
\ding{184} & \tstack{$\texttt{store}(x, r, \mosc, \swg)$} & 
\tstack{$\INSst\,r\,x$} 
\\ 
\ding{185} & \tstack{$\texttt{store}(x, r, \mosc, \sdv)$} & 
\tstack{
$\INSflushl1$ {\tt ;}
$\INSst\,r\,x$ {\tt ;}
\hl{$\INSflushl1$}} 
\\ 
\ding{186} & \tstack{$r = \texttt{fetch\_inc}(x, \mosc, \swg)$} & 
\tstack{$\INSincl1\,r\,x$} 
\\ 
\ding{187} & \tstack{$r = \texttt{fetch\_inc}(x, \mosc, \sdv)$} & 
\tstack{$\INSflushl1$ {\tt ;} 
$\INSincl2\,r\,x$ {\tt ;} 
$\INSinvall1$} 
\\ 
\end{tabular}
\caption{Compiling the revised OpenCL memory model}
\label{tab:compilation_scheme}
\end{table}

\paragraph{The AMD compilation scheme}
The operational model is quite simple. Each work-group has its own L1
cache, and each device has its own L2 cache. Since the compilation
scheme considers only the single-device case, the L2 cache can be
safely thought of as the main memory. No instruction reordering is
permitted. At any time, the environment can flush a dirty L1 cache
entry to the L2 (and thereby make it clean), can fetch an L2 entry to
replace a clean L1 entry, and can evict a clean L1 entry.

The semantics of the various assembly instructions can be summarised
as follows. $\INSld\,r\,x$ loads into register $r$ from the nearest
cache that contains a valid entry for $x$; $\INSst\,r\,x$ stores from
$r$ into the local L1 cache, first flushing $x$'s entry therein if it
is invalid; $\INSincl1\,r\,x$ increments $x$ in the local L1 cache;
$\INSincl2\,r\,x$ increments $x$ in the L2 cache, first flushing any
dirty entry for $x$ in the local L1 cache; $\INSflushl1$ flushes all
dirty entries in the local L1 cache; and $\INSinvall1$ marks all
entries in the local L1 cache as invalid.

The extensions to the compilation scheme are given in
Tab.~\ref{tab:compilation_scheme}. Here, "fetch\_inc" stands for
`atomic fetch and increment', and provides a representative of RMW
operations in OpenCL.

\paragraph{Correctness of the compilation scheme}
Most of the flush and invalidate instructions in the compilation
scheme are necessary to ensure correct release/acquire semantics. For
SC atomics, we need add only two further instructions: the
$\INSinvall1$ before the load in row \ding{183}, and the $\INSflushl1$
after the store in row \ding{184}. The need for these instructions can
be motivated by considering the following two examples, which
correspond to the classic store-buffering and IRIW litmus tests.

The memory model requires the program in Example~\ref{ex:opencl_sb}
not to produce the final state $"r0"="r1"=0$. With only
release/acquire semantics, the compilation scheme inserts no flush or
invalidate instructions between the "store" and the "load" in each
thread, and the relaxed behaviour can be observed: both threads might
pre-fetch $"x"="y"=0$ into their respective L1 caches (the threads are
in different work-groups, so they have different L1 caches), then
perform their stores, and finally load the L1-cached values of "x" and
"y". However, placing a $\INSflushl1$ after the store and an
$\INSinvall1$ before the load ensures that no sequence of fetching and
flushing can lead to the relaxed behaviour. We do not need
$\INSflushl1$ or $\INSinvall1$ instructions before or after the SC
increment instruction, because $\INSincl2$ writes directly to the L2,
invalidating the L1 as it does so.

The memory model also requires the IRIW litmus test
\begin{center}
\begin{tabular}{@{}l@{~}|||@{~}l@{~}|||@{~}l@{~}|||@{~}l@{}}
\multicolumn{4}{c}{\texttt{global atomic\_int *x; global atomic\_int *y;}} \\
\texttt{store(x,1);} &
\texttt{store(y,1);} & 
\texttt{r0=load(x);} &
\texttt{r2=load(y);}
\\
& 
& 
\texttt{r1=load(y);} &
\texttt{r3=load(x);}
\end{tabular}
\end{center}
not to produce the final state $\{"r0"="r2"=1, "r1"="r3"=0\}$. (Recall
that these "store" and "load" operations use memory order $\mosc$ and
scope $\sdv$ by default.) Here, an $\INSinvall1$
instruction between each pair of loads is sufficient to rule out such
executions.\footnote{On weaker models, such as Power, that are not
\emph{multi-copy atomic}~\cite{stone+95}, further synchronisation
would be required between the loads.}

\section{Simulating the Memory Models with \herd{}}
\label{sec:herd}

Our overhaul of SC atomics avoids the requirement for the $S$ relation
to be explicitly constructed in execution witnesses. Our hypothesis
was that this would lead to improved efficiency in the process of
exhaustively enumerating the allowed behaviour of litmus tests that
use SC atomics. We now explain how we extended the \herd{} memory
model simulator in order to enable investigation of C11 and OpenCL
litmus tests (\S\ref{sec:extensionstoherd}), and present experimental
results using \herd{} to compare the efficiency of simulation before
and after our overhaul, and also in comparison to the \CDSChecker{}
memory model simulator~\cite{norris+13} (\S\ref{sec:experiments}). For
a family of litmus tests derived from Dekker's algorithm, our results
show that our revised axioms lead to an exponential speedup in
simulation time using \herd{}, bringing performance using \herd{},
which is general-purpose and exhaustive on loop-free programs, much
closer to that of \CDSChecker{}, which is specifically tuned for the
C11 memory model and is not guaranteed to be exhaustive, even on
loop-free programs.

\subsection{Extensions to \herd{}}\label{sec:extensionstoherd}

The version of \herd{} described by Alglave et al.~\cite{alglave+14, alglave+15}
supports only assembly code: sequences of labelled instructions and
gotos. In order to simulate our formalisations of the C11 and OpenCL
memory models, we have extended the \cat{} format to support the
definition of $\faulty$ axioms, and the \herd{} tool with both a
routine for alerting the user when a faulty execution is detected and
a module for translating C11 and OpenCL programs into their executions.

We model only a small fragment of the C11 language: enough to encode
the litmus tests we found useful for testing our formalisation. We
exclude, for example, the address-of operator, compound types, and function calls.
We include \texttt{if} and \texttt{while} blocks, pointer
dereferencing, simple expressions, and built-in atomic functions such
as \texttt{atomic\_thread\_fence} (C11) and
\texttt{atomic\_work\_item\_fence} (OpenCL). 

\newcommand\thisHASbeenCUT{
A notable aspect of our extension to \herd{} is our faithful modelling
of compare-and-swap operations. We model C11's
\[ 
\stack{\texttt{atomic\_compare\_exchange\_}\\\qquad\texttt{strong\_explicit}(\var{obj},\var{exp},\var{des},\var{succ},\var{fail})}
\]
instruction (where $\var{obj}$ is an atomic location, $\var{exp}$
points to the value that $\var{obj}$ is expected to contain,
$\var{des}$ is the desired new value, $\var{succ}$ is the memory order
to use for the RMW operation in the case of success, and $\var{fail}$
is the memory order to use when loading from $\var{obj}$ in the case
of failure) as having the following executions:
\[
\stack{\left\{
\begin{array}{@{}l|l@{}}
\begin{tikzpicture}[inner sep=1pt, baseline=6.2mm]
\node[event](n1) at (0,1.4) {$\evRna(\var{exp},v_{\rm exp})$};
\node[event](n2) at (0,0.7) {$\evR(\var{obj},v_{\rm obj},\var{fail})$};
\node[event](n3) at (0,0) {$\evWna(\var{exp},v_{\rm obj})$};
\draw[edgesb] (n1) to[auto,pos=0.6] node{$sb$} (n2);
\draw[edgesb] (n2) to[auto,pos=0.6] node{$sb$} (n3);
\end{tikzpicture}
& \begin{array}{@{}l@{}} v_{\rm exp}~\text{is a value} \\ v_{\rm
obj}~\text{is a value} \\ v_{\rm exp} \neq v_{\rm obj} \end{array}
\end{array}
\right\} \cup {} \\ ~~~~~~~~~~~~~~~~~~~~~~~~
\left\{
\begin{array}{@{}l|l@{}}
\begin{tikzpicture}[inner sep=1pt, baseline=2.7mm]
\node[event](n1) at (0,0.7) {$\evRna(\var{exp},v_{\rm exp})$};
\node[event](n2) at (0,0) {$\evRMW(\var{obj},v_{\rm exp},\var{des},\var{succ})$};
\draw[edgesb] (n1) to[auto,pos=0.6] node{$sb$} (n2);
\end{tikzpicture}
& v_{\rm exp}~\text{is a value}
\end{array}
\right\}}
\]
and we omit the `$v_{\rm exp} \neq v_{\rm obj}$' condition to model
the {\tt weak} version, which may fail
spuriously.\footnote{\citec{7.17.7.4}{}} These instructions are rather
subtle to model because the memory order that must be used to access
the $\var{obj}$ location depends on the value $\var{obj}$ contains,
which of course can only be determined having performed the load. To
our knowledge, previous work on C11 memory model simulation has either
modelled these instructions incorrectly~\cite{batty+11, norris+13} or
not at all~\cite{alglave+14, vafeiadis+15}.
}

\subsection{Simulating the C11 Model: Performance Evaluation}
\label{sec:experiments}

\begin{figure}
\centering

\pgfplotstableset{
    create on use/mean/.style={create col/mean},
    create on use/stddev/.style={create col/standard deviation},
    create on use/stderror/.style={create col/standard error}
}

\begin{tikzpicture}
\begin{semilogyaxis}[
  xlabel={Number of threads, $N$}, 
  ylabel=Simulation time /s, 
  legend pos=south east,
  ymin=0.005,
  ymax=3000,
  height=60mm,
  width=85mm,
]
\addplot+[color=green, mark=square] 
table [x=N, y=mean] 
{herd_tests/c11_orig.cat-results.txt};

\addplot+[color=blue, mark=x] 
table [x=N, y=mean] 
{herd_tests/c11_simp.cat-results.txt};

\addplot+[color=red, mark=+] 
table [x=N, y=mean] 
{CDSChecker_tests/CDSChecker-results.txt};

\legend{{\Herd, original}, {\Herd, proposed}, {\CDSChecker}}
\end{semilogyaxis}
\end{tikzpicture}
\caption{Time to simulate an $N$-threaded store-buffering test}
\label{fig:graph_time}
\end{figure}
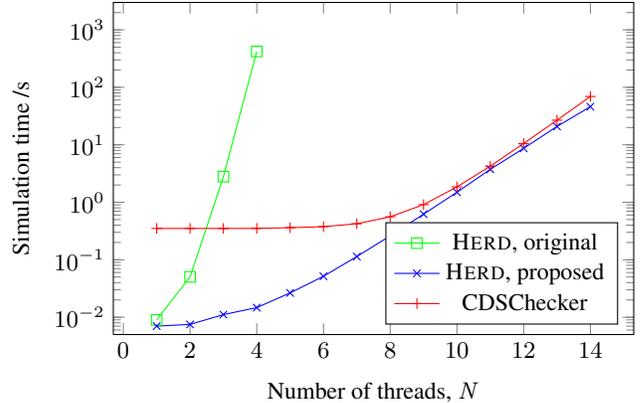

We now compare the performance of \herd{} in enumerating the behaviours of litmus
tests (a) when equipped with the original SC axioms in C11 vs.\ (b) when
equipped with our revised SC axioms.  We also provide performance results gathered using
\CDSChecker, a custom-built simulator for the C11 memory model~\cite{norris+13}.
The \herd{} tool guarantees exhaustive enumeration of allowed
behaviours for a loop-free litmus test;
\CDSChecker{} aims for high coverage of behaviours, but is known to be non-exhaustive in general~\cite{norris+13}.

Recall that Dekker's mutual exclusion algorithm~\cite{dijkstra02}
is a key use case for SC atomics.
The essential idiom underlying an $N$-threaded version of Dekker's algorithm
is captured by the following $N$-threaded store-buffering
litmus tests:
\begin{center}
$P_N~\eqdef~\Big(\text{\footnotesize\begin{tabular}{@{}l@{~}||@{~}l@{~}||@{~}l@{~}||@{~}l@{}}
\texttt{store(x$_2$,1);} &
\texttt{store(x$_3$,1);} & 
\dots &
\texttt{store(x$_1$,1);} 
\\
\texttt{r$_1$=load(x$_1$);} & 
\texttt{r$_2$=load(x$_2$);} & 
\dots & 
\texttt{r$_{N}$=load(x$_{N}$);}
\end{tabular}}\Big)$
\end{center}
that operate on a collection $\{"x"_1,\dots,"x"_N\}$ of atomic
locations initialised to zero. Recall that atomic "store" and "load"
operations use memory order $\mosc$ by default.  Dekker's algorithm
requires that it is \emph{not} possible to observe the final state
where $"r"_1=\dots="r"_N=0$; only $\mosc$ is strong enough to
rule out this relaxed behaviour.

We use the family $(P_N)_{N \in \mathbb{N}}$ to assess the scalability
of the two versions of \herd{} and of \CDSChecker.
Figure~\ref{fig:graph_time} shows the time each tool takes to simulate
$P_N$ as $N$ increases.\footnote{We used \herd{} revision 88ff189
(\url{http://github.com/herd/} \url{herdtools}) and 
\CDSChecker{} revision 7c51087 (\url{git://demsky.} \url{eecs.uci.edu/model-checker.git}).}
Experiments were conducted on a 3.1\ GHz MacBook Pro, and each data
point represents the mean of ten runs. We do not include error bars
because the standard deviation is negligible. The original memory
model, naively implemented in \herd{}, times out on just $4$ threads.
This is because it iterates over all $(2N)!$ orders of the $2N$ SC
events that are in every execution of $P_N$. When \herd{} is provided
with our revised memory model, simulation times greatly improve.
Bearing in mind the logarithmic y-axis, the performance of both
\herd{} on the revised memory model and \CDSChecker{} appears to scale
exponentially with $N$, which meets expectations since $P_N$ has
$2^N - 1$ unique final states. Still, \CDSChecker{} significantly
outperforms \herd{} when simulating $P_N$, and on several other
programs that we tried. This is because \CDSChecker{}, unlike \herd{},
is optimised specifically for the C11 memory model, through the use of
such techniques as the early elimination of infeasible executions, and
a variant of dynamic partial order reduction
(DPOR)~\cite{flanagan+05a} on the $S$ order. In fact, we conjecture
that the use of DPOR here has an effect similar to our proposal to
rephrase the memory model with $S$ as a partial order.

Figure~\ref{fig:graph_time} demonstrates that simply by tweaking the
axioms that define the memory model, simulation time can be
dramatically decreased, without the need to implement complex
optimisations, such as DPOR, that make it difficult to assess the
soundness and completeness of the tool. It happens that \CDSChecker
\emph{is} exhaustive on all of our $P_N$ programs, but we remark that
we can only be sure of this because of \Herd{}.




\section{Related Work}
\label{sec:related}

\paragraph*{The C11 memory model} has been formalised several times.
Batty et al.~\cite{batty+11} present a comprehensive formalisation
using Lem~\cite{mulligan+14}. Vafeiadis et
al.~\cite{vafeiadis+13,vafeiadis+15} and Batty et al.~\cite{batty+13}
have also formalised slightly simplified variations. Alglave et al.
have formalised a release/acquire fragment of the C11 model (without
release sequences, fences, non-atomics, or data races) in the \cat{}
language, and have shown it to be an instance of their generic
axiomatic memory model~\cite{alglave+14}. We use the \cat{} language
in our work too, but our comprehensive model, which incorporates
undefined behaviours and a richer language of events, no longer fits
within their generic framework.

We remark that in the absence of fences, our \axSsimp{} axiom
(see Theorem~\ref{thm:SC_simp}) forbids
the same dependency cycles that Shasha and Snir characterise as
violations of sequential consistency~\cite{shasha+88}. In a sense, one
contribution of our paper is to simplify the semantics of C11's SC
atomics to the point where it can be defined, for the first time, in
the Shasha--Snir style.

\paragraph{Criticisms of the C11 model}
Batty et al. describe a fundamental problem in the structure of the
C11, C++11, C++14 and OpenCL memory models: the so-called
``thin-air'' executions~\cite{batty+15}. This is a difficult open
problem requiring a radically different approach; we do not address it
here. 

Vafeiadis et al. note that the current rules governing SC
atomics break desirable properties of the memory model, harming the
prospect of reasoning above it, and they propose a strengthening of
the model to fix this~\cite{vafeiadis+15}. Our proposal builds on
theirs (\S\ref{sec:sc-partial}), but goes further
(\S\ref{sec:sc-simp}), arriving at a substantially simpler model. A
similar proposal was in fact considered by Vafeiadis et al. in the
context of the original total-order SC axioms~\cite{vafeiadis+15}, but
abandoned over concerns that it would invalidate the existing Power
compilation scheme. In our work, we have demonstrated that such a
proposal \emph{is} in fact valid on Power (and x86).

We note that despite our strengthening, SC fences remain too weak to
restore sequential consistency in all circumstances, even when placed
between every pair of accesses. This weakness was intentional in C11
to permit efficient implementation over Intel's Itanium
architecture~\cite{batty14}, but it does harm programmability. Lahav
et al.~\cite{lahav+16} have proposed an alternative implementation of
SC fences, in terms of acquire/release RMWs on a distinguished
location, that always restores sequential consistency.

\paragraph*{The OpenCL 2.0 memory model} has recently been described
by Gaster et al.~\cite{gaster+15}, as an instance of a heterogeneous
race-free (HRF) model~\cite{hower+14}. Our work improves on theirs in
several ways. A key shortcoming of their work is its relative
informality: it lacks the mathematical precision that is required to
resolve all the details of the OpenCL memory model. Our formalisation,
in contrast, is precise enough to be executed by a machine
(cf.~\S\ref{sec:herd}). Moreover, their characterisation of the OpenCL
memory model has several technical issues. It replaces the
specification's modification order (which orders atomic write events)
with a \emph{coherence order} (which orders both read \emph{and} write
events) without proving that the intent of the specification is
preserved by this change. Another infidelity to the specification is
the omission of release sequences, which prohibits the correct
treatment of release-fences. Indeed, Gaster et al. include no formal
treatment of fences at all, describing their behaviour only in prose.
Our \cat{} presentation of the OpenCL memory model treats
release sequences and fences in full.
Its informality aside, Gaster et al.'s work contains numerous insights
into the design and workings of the OpenCL memory model, and provided
a valuable basis for our formalisation efforts. 

We have already begun to build on top of the formalisation of the
OpenCL memory model presented here, as part of our investigations into
the semantics of a proposed extension to OpenCL called
\emph{remote-scope promotion}~\cite{wickerson+15a}. That work, which
has already been published, describes only a small `release/acquire'
fragment of the OpenCL memory model, while the current paper describes
the full model, including the interesting and important SC and relaxed
atomics.

\paragraph{Implementations of the OpenCL memory model} AMD and Intel
have recently released OpenCL 2.0-compliant
implementations~\cite{amd-developer-central15,intel-developer-zone14}.
We are aware only of one implementation of the OpenCL memory model
that has been formalised: namely, a compilation scheme from OpenCL
(extended with a feature called remote-scope promotion~\cite{orr+15})
to a model of next-generation AMD GPUs~\cite{wickerson+15a}. Alglave
et al. present an experimentally-validated axiomatic model of an
Nvidia GPU~\cite{alglave+15}, which could provide another compilation
target for our OpenCL memory model. However, we find that their model
is too weak to admit an efficient mapping from OpenCL. Specifically,
it does not provide the property of \emph{cumulativity}:
synchronisation at one scope cannot be chained with further
synchronisation at a wider scope to induce overall synchronisation
between the two end-points. Since cumulativity is a property required
by the OpenCL memory model, we deduce that the OpenCL compiler must,
very expensively, treat all operations as having the widest scope.

\paragraph*{Memory model simulators} other than \herd{} that are
capable of handling the C11 model include \Cppmem~\cite{batty+11},
Nitpick~\cite{blanchette+11} and \CDSChecker{}~\cite{norris+13}. We
did not include \Cppmem{} and Nitpick in our tool comparison
(\S\ref{sec:experiments}) because Norris et al. have already
demonstrated that \CDSChecker's performance is far
superior~\cite{norris+13}.

Because it is highly optimised for the C11 memory model,
\CDSChecker{} continues to outperform \herd{} even on the revised model.
\Herd{} on the other hand is deliberately designed \emph{not} to be
optimised for a particular model, but to be instead a generic memory
model simulator. A key advantage of using a generic memory model
simulator like \herd{} is that it is easy to tinker with the model
during the development process: one must only modify a text file and
restart \herd{} in order to explore the impact of a proposed change.
Indeed, this ease of modification, together with the challenge of
expressing the C11 model in the very concise \cat{} language, inspired
our discovery of the simpler SC axioms described in this paper.
Moreover, where \CDSChecker{} is designed for efficiency, sometimes
at the cost of fidelity to the memory model (the lack of
self-satisfying conditionals, for instance, is a source of
incompleteness in \CDSChecker), our formalisation and simulator
are designed primarily to represent the memory model as closely as
possible.

\CDSChecker{} obtains its main performance benefits by exploring
partial modification orders. It is therefore natural to ask whether the
memory model could be revised to accommodate partial modification
orders in the same way that we have incorporated a partial $S$
order. We believe that this is not straightforwardly possible without
changing the model: our partial order reduction on $S$ hinges on its
constraints all having the form $\irreflexive(S ; r)$ for some $r$,
but this is not the case for $mo$ -- see axiom \axiom{Rmw}
(Def.~\ref{def:consistency_axioms}) for instance.

\section{Conclusion}

Our overhaul of the semantics of SC atomics and fences provides four
main benefits in relation to the C11 and OpenCL memory models: more
efficient exploration of the behaviours of litmus tests
(cf.~\S\ref{sec:sc-partial}, \S\ref{sec:experiments}); refined specification text that we argue
is easier for programmers and compiler-writers to understand
(cf.~\S\ref{sec:c11_proposal}); improved usability of the languages by
programmers (cf.~\S\ref{sec:opencl_sc}); and opportunities for
compiler-writers to produce more efficient implementations
(cf.~\S\ref{sec:opencl_sc}). We argue that our proposed changes to the memory models
validate all of the formalised C11 and OpenCL
compilation schemes of which we are aware.

A topic for future research is the consideration of memory consistency
between OpenCL devices and the host application that launches kernels
on these devices; our treatment in this paper focuses solely on
interactions between kernel threads. We also plan to use our memory
model as a basis for reasoning about OpenCL programs, extending the
capabilities of tools such as GPUVerify~\cite{betts+15}, where
existing support for atomic operations is limited and not based on
formal foundations~\cite{bardsley+14a}.

\section*{Acknowledgements} 
\def\georgeI{EP/I020357/1} \def\georgeII{EP/K015168/1}
\def\georgeIII{EP/I01236/1} \def\allyI{EP/K011499/1} Luc Maranget
kindly advised on our extensions to \herd{}. We thank Jade Alglave,
Nathan Chong, Benedict Gaster, Vinod Grover, Lee Howes, Jeroen Ketema,
Matthew Parkinson, Peter Sewell, Tyler Sorensen, and our anonymous
reviewers for their feedback and encouragement. This work was supported
by the EPSRC (grants \allyI, \georgeI, \georgeII, and \georgeIII), and
by the EU FP7 project CARP (project number 287767).

\appendix

\section{Proof of Theorem~\ref{lem:soundnesslemma}}
\label{appx:proofs}

The following theorem states that the x86 and Power compilation schemes
for C11, as given in Tab.~\ref{tab:x86PowerComp}, remain sound in the
presence of our revised SC axiom, \axSsimp.

\begin{table}[h]
\centering
\begin{tabular}{l@{~~}l@{~~}l@{~~}l}
& C11 operation & x86  & Power
\\ \hline
\ding{182} & \tstack{$r = \texttt{load}(x,\mosc)$} & 
\textsf{lock xadd}(0) & \textsf{sync; ld; cmp; bc; isync}
\\ 
\ding{183} & \tstack{$\texttt{store}(x, r, \mosc)$} & 
\textsf{lock xchg} & \textsf{sync; st}
\\ 
\ding{184} & \tstack{$r = \texttt{fence}(x, \mosc)$} & 
\textsf{mfence} & \textsf{sync}
\\ 
\end{tabular}
\caption{Compiling the C11 SC atomics}
\label{tab:x86PowerComp}
\end{table}

{
\def\thetheorem{\ref{lem:soundnesslemma}}
\begin{theorem}[repeated from \S\ref{sec:c11_soundness}]
\soundnesslemma
\end{theorem}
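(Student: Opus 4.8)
The plan is to reduce the statement, via Lemma~\ref{lem:oep} and the equivalences established in \S\ref{sec:sc}, to the following: for each valid x86 (resp.\ Power) execution of the compiled program, the corresponding C11 execution admits a strict total order $S$ on its SC events that is compatible with $Fsb^?; (hb \cup fr \cup mo); sbF^?$ on distinct SC events --- equivalently, one that witnesses \axiom{S1}, \axiom{S2}, \axiom{S3a}, \axiom{S4a}, \axiom{S5}, \axiom{S6} and \axiom{S7} simultaneously. The existence of the C11 execution itself, its well-formedness, and the fact that it satisfies every non-SC axiom are imported wholesale from Batty et al.'s soundness proofs for x86~\cite{batty+11} and Power~\cite{batty+12}. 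The only genuinely new obligation is that the \axSsimp{} relation is acyclic on the reconstructed execution, and I would discharge this by exhibiting the required total order $S$ directly from the hardware execution and checking each generator against it.

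For x86 I would take $T$ to be the total order that x86-TSO~\cite{owens+09} imposes on all \textsf{lock}-prefixed instructions and \textsf{mfence}s, and let $S$ be its pre-image under the compilation map of Tab.~\ref{tab:x86PowerComp}: since every SC load, store and fence compiles respectively to \textsf{lock xadd}(0), \textsf{lock xchg} or \textsf{mfence}, every SC event has an image in the domain of $T$, so $S$ is a strict total order on SC events (this gives \axiom{WfS}). It then remains to show that every generator of $Fsb^?; (hb \cup fr \cup mo); sbF^?$, taken between two distinct SC events, lies in $S$. The $sb$ component (hence the $Fsb^?$/$sbF^?$ wrappers, SC fences being full barriers) is immediate from preservation of program order between \textsf{lock}/\textsf{mfence} instructions; the $sw$ component of $hb$ follows because release/acquire synchronisation between SC events forces the releasing write --- or the last write of its release sequence, which precedes it in coherence --- into the x86 memory order before the acquiring read executes; $mo$ between SC writes lies in $T$ because x86 coherence refines $T$ on writes; and transitivity of $hb$ closes the general $hb$ case. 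The one \emph{new} case is a bare $fr$ edge between two SC events (the strengthening of \axiom{S3} to \axiom{S3a}): here the source is an SC read compiled to the read--modify--write \textsf{lock xadd}(0), so its image occupies the coherence slot immediately after the write it read; since that write is $mo$-before, hence coherence-before, the target SC write, and an RMW must read the immediately-preceding write in coherence, the source's image necessarily precedes the target's image in $T$. Thus $S$ is total and contains the \axSsimp{} relation, so \axSsimp{} holds.

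For Power the argument has exactly the same shape --- pull back a total order on SC events from the Power trace of Sarkar et al.~\cite{sarkar+11} and verify each generator --- and the $sb$, $sw$, $hb$ and $mo$ cases, together with the $Fsb^?;fr;sbF^?$ cases already present in the original model, transfer from Batty et al.'s Power soundness proof~\cite{batty+12}, since each compiled SC operation is bracketed by the \textsf{sync} barrier. The obstacle is, once again, the bare $fr$-between-SC-atomics case introduced by \axiom{S3a}: on Power the SC load compiles to a plain \textsf{ld} (inside \textsf{sync};~\textsf{ld};~\textsf{cmp};~\textsf{bc};~\textsf{isync}), not an RMW, so it takes no coherence slot and the easy x86 argument is unavailable. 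Here I would argue instead from per-location coherence --- since the SC load reads a write that is coherence-before the target SC store, that store cannot yet have propagated to the reading thread when the load executes --- combined with the cumulativity of the \textsf{sync} barriers that separate the SC operations, to place the load before the target store in the reconstructed order. Making this case compose correctly with the $hb$ and $mo$ cases under transitivity, using only the propagation guarantees that the Sarkar et al.\ model actually provides, is the main technical work; everything else is bookkeeping on top of the existing compilation-correctness proofs.
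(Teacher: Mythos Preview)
Your high-level strategy---import every non-SC obligation from the existing compilation-correctness proofs and argue only for \axSsimp{}---matches the paper's. Your x86 argument is correct but takes a different route: the paper proceeds by contradiction on a putative \axSsimp{} cycle, showing that each $mo$, $hb$ and $sb$ edge in it induces a $\memord$ edge, and then handles the $fr$ edges not via the RMW-ness of \textsf{lock xadd}(0) but by observing that the event preceding the $fr$-source in the cycle is (or is $hb$-preceded by) a write, and that totality of $\memord$ on writes together with the x86 read-consistency rule forces that write $\memord$-before the $fr$-target, closing the gap. Your construction exploits the particular compilation in Tab.~\ref{tab:x86PowerComp} more directly; the paper's argument works for any x86 read, not just one compiled to an RMW.

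For Power, however, you have missed the observation that makes the case easy. Batty et al.'s soundness proof already establishes an intermediate property, $\goodsc$: a strict total order on the SC atomics that contains $po$, $co$, and $fr$---the full $fr$, not the $[\mosc]$-guarded version of \axiom{S3}---each restricted only at its SC endpoints. The original C11 axioms needed nothing stronger than \axiom{S3}, but Batty et al.\ happened to prove this stronger closure anyway. Hence the ``bare $fr$ between SC atomics'' case that you identify as the main technical work is already discharged by $\goodsc$; the paper simply cites it, notes that the fence edges layered on top already contain $Fsb^?;fr;sbF^?$ and $Fsb^?;mo;sbF^?$, records that $hb$ between SC events also lies in the order, and concludes. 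Your sketched coherence-plus-\textsf{sync}-cumulativity argument may well go through, but it is unnecessary.
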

}

\begin{proof}[Proof (x86 case)]
The axiomatic model of Owens et al. restricts the behaviour of memory
using a partial order over x86 memory events called $\memord$. The
proof of Batty et al.~\cite{batty+11} constructs the relations of the
C11 execution using $\memord$; modification order ($mo$) and
reads-from ($rf$), in particular, are projected from it. Here we rely
on several properties of $\memord$ as set out by Owens et al.: when
restricted to writes, it is a linear order; program order between two
events is included in memory order if there is an intervening fence or
if either instruction is locked; program-order edges from reads to
later events are included; and a read observes the most recent
preceding write in $\memord$.

We proceed by contradiction, showing that given the construction of
$rf$ and $mo$ used in the proof of Batty et al., any cycle in
$\mosc^2 \setminus id \cap (Fsb^?; (hb \cup fr \cup mo) ; sbF^?)$
implies either the existence of a cycle in $\memord$, or an
inconsistent $rf$ edge.

Any cycle in the relation is made up of $mo$, $hb$ and $fr$ edges,
possibly linked with $sb$ edges.  The $mo$, $hb$ and $sb$ edges all
imply corresponding $\memord$ edges.  To see this, note:
\begin{itemize}
\item $\memord$ is a linear order over writes;
\item any $hb$ edge in the \axSsimp{} relation
begins with either a fence or a locked instruction, so $sb$ edges
correspond to $\memord$ edges, then there may be a chain of edges in
$mo;rf;sb$, where the final $sb$ edge is headed by a read, so
transitivity implies that $hb$ corresponds to $\memord$; and
\item  any $sb$ edges are either between locked instructions
  or have a fence between accesses, and so correspond to $\memord$
  edges.
\end{itemize}
Finally, if the cycle contains an $fr$ edge, then $\memord$ cannot
contradict this: the read would become inconsistent in the x86
execution. Then for a given \axSsimp{} cycle, we have a sequence of
$\memord$ edges that would form a cycle if not for holes corresponding
to $fr$ edges.

We now show that there is indeed a cycle in $\memord$. Consider an
$fr$ edge: its head is a read, so the preceding edge must either be a
$sb$ edge or a $hb$ edge. If it is a $sb$ edge then either the head of
that edge is a write, or the edge that precedes that is an $hb$
edge. In all cases, the read at the head of the $fr$ edge is preceded
in $hb$ by a write.  As $\memord$ is total over writes, it must order
this preceding write's x86 counterpart before the write in the tail of
the $fr$ edge. We use this fact to construct a cycle in $\memord$, a
contradiction.
\end{proof}

\begin{proof}[Proof (Power case)]
In their proof, Batty et al. construct a C11 execution from a Power
trace such that Power coherence and $rf$ edges match their constructed
C11 counterparts. In proving the SC axioms hold over the execution,
they prove a property, $\goodsc$, that establishes a total order over
the SC atomics of the execution that contains $po$, $co$, $fr$ and an
extended variant of reads from, $erf$, each restricted to the SC
atomics. The SC fences are added to this relation in a way that is
consistent with $co$ and $rf$ in the rest of the execution, preserving
the invariant that it is a strict partial order. The following edges
become part of the SC order:
$[\mosc];po^?;(rf^{-1})^?;co;rf^?;po^?;[F \cap \mosc]$ and
$[F \cap \mosc];po^?;(rf^{-1})^?;co;rf^?;po^?;[\mosc]$.
The proof goes on to show that $hb$ restricted to the SC actions is a
subset of the total order.  The construction of $mo$ and $rf$ in the
C11 execution follow the Power trace directly, so $\goodsc$ together
with the addition of the fence edges, which contain $Fsb^?;fr; sbF^?$
and $Fsb^?;co; sbF^?$, show the acyclicity of
$\mosc^2 \setminus id \cap (Fsb^?; (hb \cup fr \cup mo) ; sbF^?)$
directly.
\end{proof}

\bibliographystyle{abbrvnat}

\bibliography{popl}

\end{document}

\clearpage

\section{Rules for SC atomics in C11}

\subsection{Original}
The following text is reproduced verbatim from the C11 standard~\citec{7.17.3, paragraphs 6 and 9--11}{}.

\begin{shaded*}
\begin{wordcounting}
\begin{enumerate}
\setcounter{enumi}{5}
\item There shall be a single total order $S$ on all
"memory\_order\_seq\_cst" operations, consistent with the ``happens
before'' order and modification orders for all affected locations,
such that each "memory\_order\_seq\_cst" operation $B$ that loads a
value from an atomic object $M$ observes one of the following values:
\begin{itemize}
\item the result of the last modification $A$ of $M$ that precedes $B$
in $S$, if it exists, or
\item if $A$ exists, the result of some modification of $M$ in the
visible sequence of side effects with respect to $B$ that is not
"memory\_order\_seq\_cst" and that does not happen before $A$, or
\item if $A$ does not exist, the result of some modification of $M$ in
the visible sequence of side effects with respect to $B$ that is not
"memory\_order\_seq\_cst".
\end{itemize}
\end{enumerate}
\noindent [\dots]
\begin{enumerate}
\setcounter{enumi}{8}
\item For an atomic operation $B$ that reads the value of an atomic
object $M$, if there is a "memory\_order\_seq\_cst" fence $X$
sequenced before $B$, then $B$ observes either the last
"memory\_order\_seq\_cst" modification of $M$ preceding $X$ in the
total order $S$ or a later modification of $M$ in its modification
order.
\item For atomic operations $A$ and $B$ on an atomic object $M$, where
$A$ modifies $M$ and $B$ takes its value, if there is a
"memory\_order\_seq\_cst" fence $X$ such that $A$ is sequenced before
$X$ and $B$ follows $X$ in $S$, then $B$ observes either the effects
of $A$ or a later modification of $M$ in its modification order.
\item For atomic operations $A$ and $B$ on an atomic object $M$, where
$A$ modifies $M$ and $B$ takes its value, if there are
"memory\_order\_seq\_cst" fences $X$ and $Y$ such that $A$ is
sequenced before $X$, $Y$ is sequenced before $B$, and $X$ precedes
$Y$ in $S$, then $B$ observes either the effects of $A$ or a later
modification of $M$ in its modification order.
\end{enumerate}
\end{wordcounting}
\reportwordcount{\ColdWORDS}{41.2}
\end{shaded*}

For reference, we include with this passage its Flesch--Kincaid (FK) reading ease score.\footnote{\url{http://www.readability-score.com}} A
higher score indicates easier readability. Scores usually range
between 0 and 100.

\newpage
\subsection{Our proposal}
\label{appx:c11_proposal}

Section~\ref{sec:sc-simp} presented our proposal for simplifying the
sequential consistency axioms in the C11 model. We give here our
suggestion for how the specification document can be rephrased to
accommodate our proposal, while maintaining the prose style used
throughout the rest of the document.

Specifically, the paragraphs quoted above can be removed
and replaced with the following three:

\begin{shaded*}
\begin{wordcounting}
\begin{enumerate}
\item A value computation $A$ of an object $M$ \emph{reads before} a
side effect $B$ on $M$ if $A$ and $B$ are different operations and $B$
follows, in the modification order of $M$, the side effect that $A$
observes.

\item If $X$ reads before $Y$, or happens before $Y$, or precedes $Y$
in modification order, then $X$ (as well as any fences sequenced
before $X$) is \emph{SC-before} $Y$ (as well as any fences sequenced
after $Y$).
 
\item If $A$ is SC-before $B$, and $A$ and $B$ are both {\tt
memory\_}{\tt order\_seq\_cst}, then $A$ is
\emph{restricted-SC-before} $B$.

\item There must be no cycles in restricted-SC-before.
\end{enumerate}
\end{wordcounting}
\reportwordcount{\CnewWORDS}{73.1}
\end{shaded*}

\clearpage
\section{Rules for SC atomics in OpenCL}

\subsection{Original}

The following text is reproduced verbatim from the OpenCL 2.1
standard~\citecl{51}{14}{52}{13}. We exclude the citations to the C11 standard. 

\begin{shaded*}
\begin{wordcounting}
\noindent If one of the following two conditions holds:
\begin{itemize}
\item All "memory\_order\_seq\_cst" operations have the scope
"memory\_scope\_all\_svm\_devices" and all affected memory locations
are contained in system allocations or fine grain SVM buffers with
atomics support
\item All "memory\_order\_seq\_cst" operations have the scope
"memory\_scope\_device" and all affected memory locations are not
located in system allocated regions or fine-grain SVM buffers with
atomics support
\end{itemize}
then there shall exist a single total order $S$ for all
"memory\_order\_seq\_cst" operations that is consistent with the
modification orders for all affected locations, as well as the
appropriate global-happens-before and local-happens-before orders for
those locations, such that each "memory\_order\_seq\_cst" operation
$B$ that loads a value from an atomic object $M$ in global or local
memory observes one of the following values:
\begin{itemize}
\item the result of the last modification $A$ of $M$ that precedes $B$
in $S$, if it exists, or
\item if $A$ exists, the result of some modification of $M$ in the
visible sequence of side effects with respect to $B$ that is not
"memory\_order\_seq\_cst" and that does not happen before $A$, or
\item if $A$ does not exist, the result of some modification of $M$ in
the visible sequence of side effects with respect to $B$ that is not
"memory\_order\_seq\_cst".
\end{itemize}
{[\dots]}\par
\noindent If the total order $S$ exists, the following rules hold:
\begin{itemize}
\item For an atomic operation $B$ that reads the value of an atomic
object $M$, if there is a "memory\_order\_seq\_cst" fence $X$
sequenced-before $B$, then $B$ observes either the last
"memory\_order\_seq\_cst" modification of $M$ preceding $X$ in the
total order $S$ or a later modification of $M$ in its modification
order.
\item For atomic operations $A$ and $B$ on an atomic object $M$, where
$A$ modifies $M$ and $B$ takes its value, if there is a
"memory\_order\_seq\_cst" fence $X$ such that $A$ is sequenced-before
$X$ and $B$ follows $X$ in $S$, then $B$ observes either the effects
of $A$ or a later modification of $M$ in its modification order.
\item For atomic operations $A$ and $B$ on an atomic object $M$, where
$A$ modifies $M$ and $B$ takes its value, if there are
"memory\_order\_seq\_cst" fences $X$ and $Y$ such that $A$ is
sequenced- before $X$, $Y$ is sequenced-before $B$, and $X$ precedes
$Y$ in $S$, then $B$ observes either the effects of $A$ or a later
modification of $M$ in its modification order. 
\item For atomic operations $A$ and $B$ on an atomic object $M$, if
there are "memory\_order\_seq\_cst" fences $X$ and $Y$ such that $A$
is sequenced-before $X$, $Y$ is sequenced-before $B$, and $X$ precedes
$Y$ in $S$, then $B$ occurs later than $A$ in the modification order
of $M$.
\end{itemize}
\end{wordcounting}
\reportwordcount{\OoldWORDS}{-22.0}
\end{shaded*}

\newpage
\subsection{Our proposal}
\label{appx:opencl_proposal}

Section~\ref{sec:opencl_sc} presented our proposal for simplifying the
sequential consistency axioms in the OpenCL model. We give here our
suggestion for how the specification document can be rephrased to
accommodate our proposal, while maintaining the prose style used
throughout the rest of the document.

Specifically, the paragraphs quoted above can be removed and replaced
with the following three. 

\begin{shaded*}
\begin{wordcounting}
\begin{enumerate}
\item A value computation $A$ of an object $M$ \emph{reads before} a
side effect $B$ on $M$ if $A$ and $B$ are different operations and $B$
follows, in the modification order of $M$, the side effect that $A$
observes.

\item If $X$ reads before $Y$, or global happens before $Y$, or local
happens before $Y$, or precedes $Y$ in modification order, then $X$
(as well as any fences sequenced before $X$) is \emph{SC-before} $Y$
(as well as any fences sequenced after $Y$).
 
\item If $A$ is SC-before $B$, and $A$ and $B$ are both {\tt
memory\_}{\tt order\_seq\_cst}, and $A$ and $B$ have inclusive scopes,
then $A$ is \emph{restricted-SC-before} $B$.

\item There must be no cycles in restricted-SC-before.
\end{enumerate}
\end{wordcounting}
\reportwordcount{\OnewWORDS}{71.0}
\end{shaded*}

The only departures from our proposal for the C11 memory model
(Sec.~\ref{appx:c11_proposal}) are the requirement of inclusive
scopes, and the splitting of happens-before into its global and local
versions.

\end{document}